\definecolor{eclipseBlue}{RGB}{42,0.0,255}
\definecolor{eclipseGreen}{RGB}{63,127,95}
\definecolor{eclipsePurple}{RGB}{127,0,85}
\definecolor{pythonOrange}{RGB}{227,98,9}
\definecolor{pythonPurple}{RGB}{111,66,193}
\newtheorem{prop}{Proposition}
\newcommand{\en}{endorsement\ }
\newcommand{\ed}{endorser\ }
\title{Optimal endorsement for network-wide distributed blockchains}
\author{
    \IEEEauthorblockN{Iman Lotfimahyari, Paolo Giaccone}
    
	\IEEEauthorblockA{Dipartimento di Elettronica e Telecomunicazioni - Politecnico di Torino - Torino, Italy\\
    e-mail: firstname.lastname@polito.it}
}
\begin{document}

\maketitle

\begin{abstract}
%

Blockchains offer trust and immutability in non-trusted environments, but most are not fast enough for latency-sensitive applications. Hyperledger Fabric (HF) is a common enterprise-level platform that is being offered as Blockchain-as-a-Service (BaaS) by cloud providers. 
In HF, every new transaction requires a preliminary {\em \en }by multiple mutually untrusted parties called organizations, which contributes to the delay in storing the transaction in the blockchain.  
The {\em \en policy} is specific to each application and 
defines the required approvals by the {\em \ed peers} ({\bf EP}s) of the involved organizations.

In this paper, given an input {\em \en policy}, we studied the optimal choice to distribute the \en requests to the proper {\bf EP}s. We proposed the {OPEN} algorithm, devised to minimize the latency due to both network delays and the processing times at the {\bf EP}s. 
By extensive simulations, we showed that OPEN can reduce the {\em \en latency} up to 70\% compared to the state-of-the-art solution and approximated well the introduced optimal policies while offering a negligible implementation overhead compared to them.

%
%

\end{abstract}

\begin{IEEEkeywords}
    Blockchains, Hyperledger Fabric, Endorsement policy
\end{IEEEkeywords}

\tagged{trp,jrn}
\section{Introduction}

Nowadays, blockchains have become more and more relevant in many ICT applications. Pioneered by Bitcoin~\cite{bitcoin} and Ethereum~\cite{eth}, blockchains bring trust between different entities where trust is either nonexistent or unproven. They can improve security and privacy while offering a decentralized structure. The provided immutability brings visibility and traceability, beneficial for ICT applications, such as banking, supply-chain, IoT, healthcare, and energy sectors~\cite{bc1,bc2,bc3}.  

A blockchain is public if it is open to everyone to read otherwise it is private. But, if a node needs permission to participate in validating transactions, then the blockchain is permissioned otherwise it is permissionless~\cite{hf}. In contrast to public permissionless blockchains like Bitcoin, many enterprise applications require performance that permissionless blockchains are unable to deliver. Furthermore, many use cases necessitate knowing the identity of the participants, such as in financial transactions where notary service regulations must be followed. Private permissioned blockchains, such as Hyperledger Fabric (HF)~\cite{hf1} and Corda~\cite{corda}, meet such requirements.
In HF, a transaction must be endorsed (i.e., approved) by the organizations constituting the blockchain, according to a specified \en policy. This guarantees a mutual agreement between non-trusted parties, similarly, in the physical world, to a receipt declaring an asset transfer between two parties, signed by both parties.

HF uses an architecture called Execute-Order-Validate for transactions, enabling the definition of \en policies. During the execution phase, the client sends the transaction to some Endorser Peers (EPs), based on the user's specified \en policy. Each EP processes the transaction by only simulating it without applying the results on the blockchain. The simulation result, denoted as ``endorsement", is signed by the EP and returned to the client. Finally, if the \en policy is satisfied, the signed and endorsed proposal of the transaction will be sent to the blockchain nodes to be stored.

The \en delay experienced by a client is affected mainly by two components: i) the network delay between the client and the EPs and ii) the processing delay at each EP. 
The network delay mainly depends on the network congestion and the propagation delays, whereas the processing delay depends on both the CPU capability and the computation load of each EP.
Because network and processing delays are time-varying and hence difficult to predict, optimally selecting EPs is hard. Note that a selection algorithm choosing just the best EP based on the minimum experienced delays will concentrate the \en requests to the same EPs, increasing the network congestion and the processing load, thus increasing the overall \en delays. 
In this work, we propose an {\em optimal EP selection policy} minimizing the \en delays. The main idea is to send redundant \en requests to multiple EPs. The adopted spatial diversity increases the chance of having the best EPs among the selected ones. The benefit of the proposed approach can be captured by a simple queueing model in which a task is sent in parallel to multiple servers, each with its queueing system, to minimize task completion time.

In this paper, our novel contributions are as follows:
\begin{itemize}
 \item We highlight the role of the network and processing delays in the overall \en delay.
 \item We refer to a simple analytical model, based on the classical theory of queueing systems, to evaluate the effect of redundancy in selecting the EPs and to compute the optimal number of EPs. 
 \item We propose an optimization approach denoted as \textit{OPtimal ENdorsement~(OPEN)} based on the analytical results, leveraging the history of endorsement delays.  
\item We demonstrate through extensive simulations that OPEN outperforms the state-of-the-art solution and accurately approximates other optimal policies while having a much lower implementation overhead compared to them.
\item We provide some preliminary results regarding a proof-of-concept implementation.
\end{itemize}

The rest of this paper is structured as follows. Sec.~\ref{sec:hf} describes the HF architecture and then focuses on the endorsement phase delay by introducing the network model and the EP selection problem. Sec.~\ref{sec:mod} explains a simple analytical model, derived from classical results on queueing theory, to find the optimal number of EPs in a simplified scenario. In Sec.~\ref{sec:opt}, we propose an EP selection algorithm based on the optimal replication factor computed analytically in Sec.~\ref{sec:mod}, able to operate in a generic scenario. In Sec.~\ref{sec:simu}, we assess by simulation the performance of our proposed approach and compare it with the alternatives proposed and with the state-of-the-art solution.
In Sec.~\ref{sec:poc} we show some preliminary results of a proof-of-concept implementation of OPEN. 
 In Sec.~\ref{sec:related} we discuss the related work. Finally, we draw our conclusions in Sec.~\ref{sec:conc}.

\tagged{trp,jrn}
\section{Hyperledger Fabric architecture and \en}\label{sec:hf}

\tagged{trp}{
We start by describing the HF architecture and then we focus on the \en phase.
\subsection{Hyperledger Fabric architecture and protocol}
The Execute-Order-Validate approach enables the simulation of the transactions before the agreement of the participants on recording the results in the blockchain.
We highlight the role of the following entities in the reported reference architecture in~\cref{fig:arch}.
\begin{figure}[!tb]
  \centering
  \includegraphics[width=7cm]{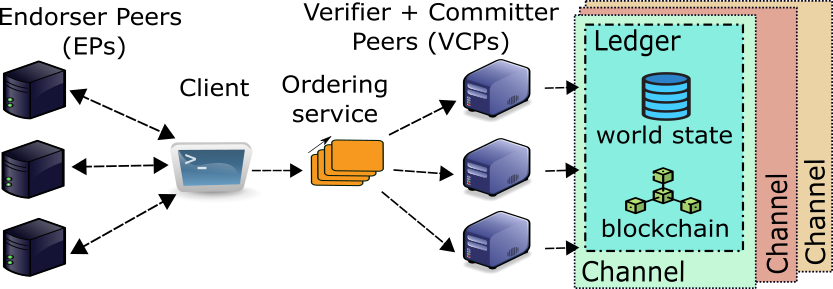}
  \caption{The main entities involved in HF architecture}
  \label{fig:arch}
\end{figure}
}
\tagged{jrn}{\subsection{Hyperledger Fabric architecture and protocol}
The Execute-Order-Validate approach enables the simulation of the transactions before the agreement of the participants on recording the results in the Hyperledger Fabric blockchain.
We describe the role of the entities which are participating in the simulation phase of~\cref{fig:phase}.
\begin{figure}[!tb]
  \centering
  \includegraphics[width=8cm,height=4.5cm]{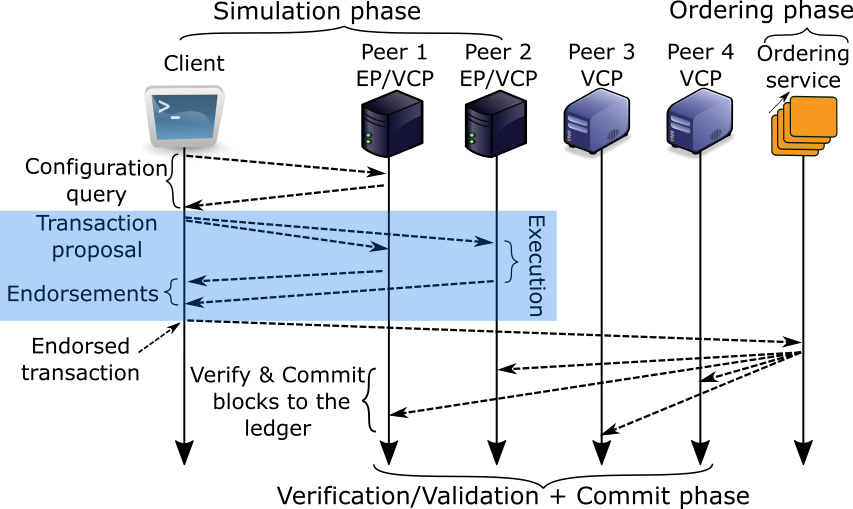}
  \caption{{Transaction processing phases in HF highlighting all the message interactions between the involved entities}
  }
  \label{fig:phase}
\end{figure}
}

The \textit{client} 
is responsible for preparing the transaction proposal of the users' transactions and sending it to the Endorser Peers (defined below) based on the specified \en policy. If the client receives enough endorsements before a specific time-out, it forwards them to the ordering service; otherwise, the client can re-transmit the same proposal in the hope of receiving enough endorsements in time. 

\tagged{trp}{The \textit{ledger} stores transactions in a distributed manner and is composed of two distinct yet related parts. The first is the actual ``blockchain," which maintains a log of transactions shared among all nodes and is immutable by design.
The other is the ``world state" which holds the most recent state associated with the validated transactions, which is recorded on the blockchain log.} 

The \textit{Peer} is the element responsible for the following tasks. The \textit{Endorser Peer}~(EP) simulates/executes the transaction received from the client application, based on the current values of the world state. The \textit{Verifier/Committer}~(VCP) receives a block of simulated transactions from the ordering service and verifies their legitimacy to mark them as validated or invalidated. Then it appends the verified block to the blockchain, comprising all the transactions (validated or invalidated).
To update its copy of the ledger, an EP is typically a VCP at the same time. The peers are owned by various \textit{organizations} that are blockchain members. An organization can be as small as individuals or as large as a multi-national corporation.
\tagged{trp}{The \textit{ordering service} receives endorsed transactions from clients, aggregates them into blocks, and distributes them to the VCPs.
The \textit{channel} is a private blockchain overlay that provides data confidentiality and ledger isolation. The transacting parties must be authenticated on a channel to read/write the corresponding data.

The \textit{smart contract} is a piece of code responsible for validating the transactions and thus performing any read-set/write-set of interactions with the ledger. Its functions include simulating the transactions, validating them based on the \en policy and corresponding word state, and finally updating the world state based on validated transactions.}

The \textit{\en policy} defines the logical conditions to validate a transaction in terms of the EPs on a channel that must execute a transaction proposal. In Sec.~\ref{sec:endo}, we will describe in detail the representation of the \en policy.
The definition of an \en policy is at the organizational level, which means any EP of that organization can represent that organization in the \en policy. A transaction should pass three phases to be stored in the blockchain, as shown in~\cref{fig:phase}.

\tagged{trp}{In the \textit{simulation phase} the client prepares the proposal for the transaction and sends it to a set of EPs, which depends on the specified \en policy. The EPs execute the transaction proposal and return the signed results to the client. As soon as the \en policy is satisfied, the client sends the signed \en results to the ordering service.
During the \textit{ordering phase} the endorsed transactions received from the clients are ordered and packed to create new blocks. The VCPs will receive these blocks.

In the \textit{verification/validation and commit phase}, the received blocks from the ordering service are verified to have legitimate transactions based on \en requirements and world state values. 
The block will be added to the blockchain by the VCPs, and the world state will be updated using the validated transactions.

\begin{figure}[!tb]
  \centering
  \includegraphics[width=8cm]{Figures/phases5.png}
  \caption{{Transaction processing phases in HF highlighting all the message interactions between the involved entities}
  }
  \label{fig:phase}
\end{figure}}

\subsection{Standard form of an \en policy}\label{sec:endo}

HF provides a very flexible way to define an \en policy.
We will show that any \en policy, despite its complexity, can be reduced to a standard form. 
In HF, the definition of an \en policy is based on a syntax that allows the operators ``AND", ``OR" and ``$k$-OutOf" to be applied to a set of organizations and nested expressions~\cite{hfend}. In particular, the operator ``$k$-OutOf-$E$" returns true whenever at least $k$ expressions within set $E$ are satisfied. Despite the complexity of the policy expression, we prove that the following proposition holds:
\begin{prop}\label{pr:p1}
Any \en policy obtained by combining arbitrarily ``AND", ``OR", and ``OutOf" operators is equivalent to the policy: 
 \begin{equation}\label{eq:or1}
    \text{OR}(St_1,St_2,...)
 \end{equation}
where each $St_i$ is either a single organization or the conjunction~(``AND") of different organizations.
\end{prop}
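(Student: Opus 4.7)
The plan is to recognize that \eqref{eq:or1} is simply the disjunctive normal form (DNF) over the organizations, so the proposition amounts to showing that any policy built from AND, OR, and $k$-OutOf admits a DNF representation in which literals are organizations (never negated organizations). The proof naturally splits into two steps: first eliminate the $k$-OutOf operator in favor of AND/OR, then flatten the resulting AND/OR expression into a DNF by repeated application of distributivity.

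For the first step, I would observe that $k$-OutOf-$E$, where $E=\{e_1,\dots,e_n\}$, evaluates to true exactly when at least one $k$-sized subset of $E$ has all its expressions satisfied. Hence it is logically equivalent to
\begin{equation}
\text{OR}_{S \subseteq E,\, |S|=k}\; \text{AND}_{e \in S}\, e,
\end{equation}
which uses only AND and OR. Replacing every occurrence of $k$-OutOf in the original policy by this expansion yields an equivalent formula built exclusively from AND and OR applied to organizations.

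For the second step, I would proceed by structural induction on the AND/OR expression. The base case is a single organization, which is already in the target form (a $St_i$ consisting of one organization, inside a trivial OR). For the inductive step, suppose the subexpressions are already in the form $\text{OR}(St_1^{(a)},\ldots)$ and $\text{OR}(St_1^{(b)},\ldots)$. If the top-level operator is OR, simply concatenate the lists of $St_i$'s. If it is AND, apply the distributive law to obtain an OR of pairwise conjunctions $St_i^{(a)} \wedge St_j^{(b)}$; since each $St_i^{(a)}$ and $St_j^{(b)}$ is itself a conjunction of organizations (or a single organization), their conjunction is again a conjunction of organizations, matching the standard form. Extending this to $k$-ary AND is routine by associativity.

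The main subtlety, rather than obstacle, is that one must check that literals remain \emph{positive} throughout, i.e.\ organizations are never negated. This holds because none of the three allowed operators introduces negation, and the expansion of $k$-OutOf above uses only AND and OR of the original expressions. A minor bookkeeping point is handling the degenerate cases of $k$-OutOf (for $k=0$ and $k>|E|$), which are trivially true or false and can be represented as an empty conjunction or an empty disjunction if they arise. Combining both steps establishes the claimed equivalence.
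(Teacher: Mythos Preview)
Your proposal is correct and follows essentially the same approach as the paper: expand $k$-OutOf into an OR of ANDs over all $k$-subsets, then appeal to distributivity to reach disjunctive normal form. Your version is more careful (structural induction, the positivity remark, and the degenerate cases of $k$-OutOf), but the underlying argument is the same.
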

\begin{proof}
In the case of expressions based on only ``AND" and ``OR" operators, thanks to the distribution principle in logic expressions, we can transform the original expression into the target form~\eqref{eq:or1}. In the case of ``$k$-OutOf$(e_1,e_2,\ldots, e_m)$ operator, where $e_i$ is a single expression, by definition this holds:
\begin{equation*}\label{eq:kout1}
  k\text{-OutOf}(e_1,e_2,\ldots, e_m)=\text{OR}(\{\text{AND}(E)\}_{E\in\Omega})
\end{equation*}
being $\Omega$ the set of all $\binom{m}{k}$ combinations of $k$ expressions from the set of $m$. 
Now, since any expression with the ``OutOf" operator is equivalent to one with only ``AND" and ``OR", by following the previous reasoning, such expression can be reduced to the expression~\eqref{eq:or1}. 
\end{proof}

The policy, defined at the organization level, must be mapped into a policy defined at the EP level since the \en requests should be sent to the proper EPs. 
So, getting the \en from a specific organization requires receiving it from {\em any} of its EPs, which is equivalent to the policy $1$-OutOf$(p_1,p_2,\ldots)$, where $p_i$ are the EPs within the organization. Revisiting Proposition~\ref{pr:p1} applied at the policy expression at the EP level, we can claim:
\begin{prop}\label{pr:p2}
Any \en policy defined at the organization level can be expanded into an \en policy defined at the EP level as follows:
\begin{equation}\label{eq:or2}
\text{OR}(St^{'}_1,St^{'}_2,...)
\end{equation}
where each of $St^{'}_i$ is either a single EP or the conjunction~(``AND") of different EPs.
\end{prop}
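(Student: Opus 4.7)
The plan is to reduce Proposition~\ref{pr:p2} to Proposition~\ref{pr:p1} by a textual substitution of each organization with an OR over its EPs, and then reapplying the normalization of Proposition~\ref{pr:p1} on the resulting EP-level expression.

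First, I would apply Proposition~\ref{pr:p1} to the given organization-level policy, obtaining the standard form $\text{OR}(St_1,St_2,\ldots)$, where each $St_i$ is either a single organization or an AND of organizations. Second, for every organization $O$ appearing as an atom in some $St_i$, I would substitute it with $1\text{-OutOf}(p_1^O,p_2^O,\ldots,p_{n_O}^O)=\text{OR}(p_1^O,\ldots,p_{n_O}^O)$, where the $p_j^O$ are the EPs belonging to $O$. The justification here is exactly the observation already stated in the excerpt: obtaining an endorsement from $O$ is equivalent to obtaining it from any of its EPs.

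Third, I would observe that after this substitution the whole expression is built only from the operators AND and OR over atomic EP identifiers, since each $St_i$ is either already a single OR over EPs, or an AND whose atoms are themselves OR-expressions over EPs. Invoking Proposition~\ref{pr:p1} once more, this time with EPs playing the role of atoms, collapses the expression into the desired normal form $\text{OR}(St'_1,St'_2,\ldots)$, where each $St'_i$ is a single EP or an AND of EPs.

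The main obstacle, although fairly mild, is making precise that Proposition~\ref{pr:p1} can legitimately be reused on the substituted expression: its proof only appeals to logical distributivity and the identity for the ``OutOf'' operator, so it is agnostic about what the atomic propositions actually denote. Once this reuse is justified, the argument is essentially a two-step rewrite (substitute, then distribute), and no combinatorial blow-up beyond what is already handled in Proposition~\ref{pr:p1} can occur.
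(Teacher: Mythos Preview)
Your proposal is correct and follows essentially the same approach as the paper: replace each organization by the $1$-OutOf (i.e., OR) over its EPs, then re-apply Proposition~\ref{pr:p1} at the EP level. Your preliminary step of first normalizing at the organization level is an extra but harmless pass; the paper simply substitutes directly and invokes Proposition~\ref{pr:p1} once on the resulting EP-level expression.
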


The result of Proposition~\ref{pr:p1} allows investigating only one standard form of \en expression, independently from the original expression complexity. Now, by using Proposition~\ref{pr:p2}, we will have the \en expression extended at the EP level. At this level, the final \en will be just in the form of the OR between the conjunction~(``AND") of different EPs of different organizations, as in~\eqref{eq:or2}. 

For example, consider a scenario with three organizations and two EPs in each of them. If the \en policy is ``$2$-OutOf$(o_1,o_2,o_3)$", then we can rewrite it as:
\begin{multline}\label{eq:2out2}
    \text{2-OutOf}(o_1,o_2,o_3)=\\
    \text{OR}(\{\text{AND}(p_{ij},p_{i^{'}j^{'}}), \forall i, \forall i'\neq i, \forall j, \forall j'\}) 
\end{multline}
where $o_i$ is \text{organization} $i$, and $p_{ij}$ is the EP $j$ of \text{organization} $i$. The expanded version in~\eqref{eq:2out2} lists all the possible combinations of the EPs that can satisfy the \en policy according to the standard form.



\subsection{Endorser peer (EP) selection algorithm}

In our work we focus on the EP selection algorithm, starting from the standard form of the \en policy. 
%
The {\em \en delay} is the amount of time the client waits, from sending the \en request until receiving the first \en reply that satisfies the \en policy. 
The response delay from an EP is the sum of two components: the network delay and the processing delay at the EP. The {\em network delay} depends on the propagation delay and the queueing delay along the path to the EP, which is affected by the time-variant congestion conditions. The overall {\em processing delay} depends on the {\em queuing} at the EP before being served and the {\em computation time} at the EP, which depends on the CPU speed and the instantaneous CPU load and resource contentions.

Because the standard form of any \en policy comprises an overall ``OR" operator, as in~\eqref{eq:or2}, the \en latency corresponds to the {\em minimum} delay to get a valid statement. Also, each statement is based on an ``AND" operator between EPs, so the delay of each statement depends on the {\em maximum}  response delay of all EPs included in a statement. 
In summary, the \en latency depends on the ``fastest'' group of EPs forming a statement, while the delay of each group depends on the ``slowest" EP within the group.

\subsection{System model for the \en phase}

Without loss of generality, we consider a fixed network topology connecting $C$ clients with $Q$ organizations, each of them with a generic network connecting the internal EPs, depicted in Fig.~\ref{fig:model1}. We assume that all nodes in the system are always available, the routing is fixed, and the links have enough bandwidth to prevent network congestion caused by the \en protocol. Thanks to the service discovery process in HF, we consider only the most updated EPs.

\tagged{trp,jrn}
\section{Background on optimal replication in queueing systems}\label{sec:mod}

Now, we discuss an analytical model to compute the optimal number of EPs for each transaction, derived from classical results on task replication in a queueing system, as explained in Sec.~\ref{sec:related}. 
For the sake of readability, we report the adopted notation in Table~\ref{table:notations}.

\begin{figure}[!tb]
\minipage{0.5\columnwidth}
\centering
  \includegraphics[width=0.9\textwidth,height=2.48cm]{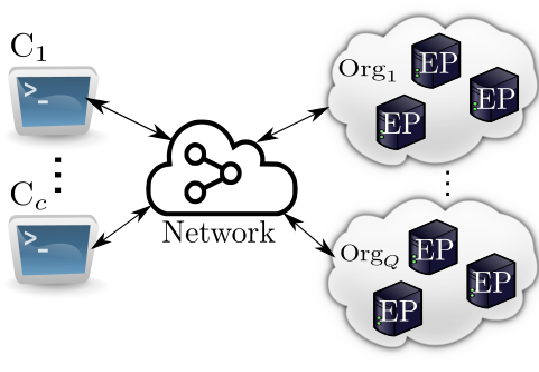}
  \subcaption{General network model}
  \label{fig:model1}
\endminipage
\minipage{0.5\columnwidth}
\centering
  \includegraphics[width=0.9\textwidth]{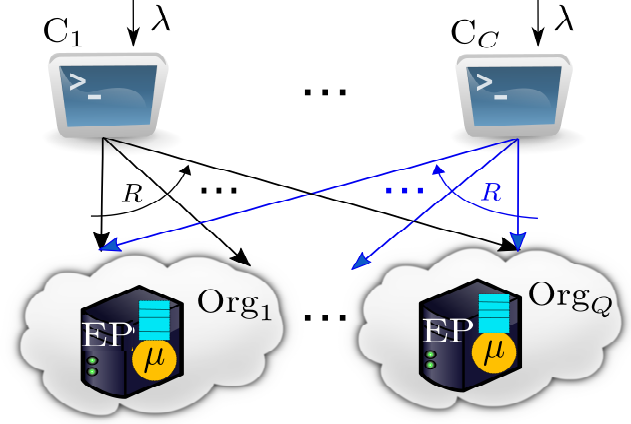}
  \subcaption{$R$ requests for $1$-OutOf-$Q$}
  \label{figure:SmplSysModel}
\endminipage
\caption[width=1\linewidth]{{Network model and the \en policy sending each request to $R$ organizations/peers in parallel.}}
\label{figure:NetmodelPolicy}
\end{figure}

\begin{table}[!tb]
\caption{Notation}
\begin{tabularx}{1.0\columnwidth}{@{}p{0.10\columnwidth}p{0.85\columnwidth}@{}}
\toprule
  $C$ & number of clients\\
  $Q$ & number of organizations\\
  $p$ & endorser peer (EP)\\
  $\lambda$ & arrival rate of new transactions to each client\\
  $\mu$ & inverse of computation time for the EP server\\
  $U$ & utilization factor in each EP server \\
  $R$ & redundancy factor \\ 
  $W_i$ & waiting time needed for the $i$th request to be served\\
  $S_i$ & inter-arrival time between the ordered version of  $\{W_i\}_i$\\
  $\gamma$ & normalized load factor for the worst case $R=Q$\\
  $\hat{R}_k$ & optimal $R$ for policy $k\text{-OutOf-}Q$\\
  $L_k$ & endorsement latency for policy $k\text{-OutOf-}Q$\\
  $\mathcal P$ & set of all available EPs\\
  $\mathcal P_e$ & set of selected EPs\\
  $\mathcal P_e^{\text{old}}$ & set of previously selected EPs\\
  $T$ & probe sampling period\\
  $\text{TX}^n$ & transaction with local sequence number $n$\\
  $x^k_p$ &  \en latency of $\text{TX}^{k}$ for peer $p$\\
  $t^\text{resp}_p$ & virtual response delay of a peer for the new TX\\
  $t^\text{busy}_p$ & virtual time at which  EP $p$ is not busy anymore\\
  $\tau_p^\text{proc}$ & processing delay for the current endorsement request\\
  $\tau^\text{net}_p$ & network delay for EP $p$\\
  $\tau^\text{queue}_p$ & queueing time experienced by the TX at EP $p$\\
  $d_{cp}$ & the network delay between client $c$ and EP $p$\\
\bottomrule
\end{tabularx}
\label{table:notations}
\end{table}

We consider a simplified model as shown in~\cref{figure:SmplSysModel}, with one EP in each organization. We assume $1\text{-OutOf-}Q$ as the \en policy, which corresponds to $\text{OR}(p_1,p_2,\ldots,p_Q)$ in its standard form. For now, we neglect the network delays and concentrate just on processing delays.
We suppose each client generates \en requests according to a Poisson process with rate $\lambda$. Each client selects at random $R$ EPs to send the \en request. $R$ will be denoted in the following as {\em redundancy factor}. To model the processing time variability at the EP, we assume an exponentially distributed processing time with an average $1/\mu$, coherently with past works~\cite{end6,end3}. Thus each EP can be modeled as an M/M/1\footnote{In classical queueing theory, an M/M/1 queue has a single server,  arrivals follow a Poisson process and service times are exponentially distributed~\cite{mm1}.} queue with arrival rate {$\lambda RC/Q$} and service rate $\mu$. We define the utilization factor for each EP as {$ U=\lambda RC/\mu Q$}.
 Thus, for the request traffic to be sustainable, {$U < 1$} and the \en request arrival rate must satisfy {$\lambda < \mu Q/RC$}.
We can now claim the following:
\begin{prop}\label{lm:l1}
Under a sustainable arrival rate of \en requests and a random selection policy with $R$ EPs, according to the \en policy $1\text{-OutOf-}Q$, it holds for the \en latency $L_1$:
\begin{equation}\label{eq:1latency0}
    E[L_1]=\dfrac{1}{\mu-\dfrac{\lambda RC}{Q}}\left(\dfrac{1}{R}\right)\qquad R\in[1,\ldots,Q]
\end{equation}
\end{prop}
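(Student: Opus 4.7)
The plan is to reduce the claim to two classical facts: (i) the sojourn time in an M/M/1 queue is exponentially distributed, and (ii) the minimum of $R$ i.i.d. exponential random variables is itself exponential with rate equal to $R$ times the individual rate.

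First, I would exploit the symmetry of the random selection policy. Each of the $C$ clients generates endorsement requests as a Poisson process of rate $\lambda$ and dispatches each request to a uniformly random subset of $R$ EPs out of the $Q$ available. Hence each specific EP receives a share $R/Q$ of every client's traffic, giving (under the usual Poisson thinning / Kleinrock independence approximation) an aggregate Poisson arrival stream of rate $\lambda R C / Q$ at every EP. Together with the exponential service at rate $\mu$, this yields the M/M/1 model already postulated in the excerpt, with utilization $U=\lambda RC/(\mu Q) < 1$ by the sustainability hypothesis.

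Second, I would invoke the standard M/M/1 result: in steady state the response time (queueing plus service) of a typical arrival is exponentially distributed with rate $\mu(1-U) = \mu - \lambda RC/Q$. Applying this at each of the $R$ EPs chosen for a given transaction, and treating the $R$ response times as independent (again the Kleinrock-style approximation, justified by the random spreading of requests across EPs), the endorsement latency under the $1$\text{-OutOf-}$Q$ policy is
\begin{equation*}
L_1 = \min_{i=1,\ldots,R} W_i,
\end{equation*}
where each $W_i$ is exponential with rate $\mu - \lambda RC/Q$. The minimum of $R$ i.i.d. exponentials with rate $\alpha$ is exponential with rate $R\alpha$, so
\begin{equation*}
E[L_1] = \frac{1}{R\bigl(\mu - \lambda RC/Q\bigr)},
\end{equation*}
which is exactly \eqref{eq:1latency0}. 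The admissible range $R\in[1,\ldots,Q]$ follows from the requirement $U<1$ together with the fact that a request cannot be replicated to more EPs than exist.

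The main subtlety is the independence assumption on the $W_i$: because a single client request is copied to $R$ different EPs, the arrival processes at different EPs are in fact correlated, so the M/M/1-per-node decomposition is an approximation rather than an identity. This is the standard modelling choice used in the task-replication literature cited in Sec.~\ref{sec:related}, and it is the only nontrivial step; everything else is a direct computation.
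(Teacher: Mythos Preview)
Your argument is correct and follows essentially the same route as the paper: compute the per-EP arrival rate $\lambda RC/Q$, use the M/M/1 sojourn-time distribution to get each $W_i$ exponential with mean $1/(\mu-\lambda RC/Q)$, and then take the minimum of $R$ i.i.d.\ exponentials. Your additional discussion of the Kleinrock-style independence approximation is more explicit than the paper, which simply asserts the $W_i$ are i.i.d.; the only minor quibble is that the range $R\in[1,\ldots,Q]$ comes purely from the combinatorial constraint on replication, not from $U<1$.
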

\begin{proof}\label{pr:pl1}
From~\cref{figure:SmplSysModel}, let $\lambda'$ be the average incoming rate of the requests for the queue of each EP such that: 
\begin{equation}\label{eq:1latency1}
    \lambda^{'}=\dfrac{\lambda RC}{Q}
\end{equation}
We define $W_i$ as the waiting time of a request to be served at the $i$th EP, which is the sum of queuing time and the serving time of the request in the $i$th EP. From M/M/1 well-known properties~\cite{mm1}, $W_i$ are i.i.d. and exponentially distributed with mean: 
{$E[W_i]=1/(\mu-\lambda^{'})$}.

Observe that: {$\enskip L_1=\text{min}(W_1,W_2,\ldots,W_R)$}
where $W_i$ are i.i.d..
From basic properties of the exponential distribution, $L_1$ is exponentially distributed with mean:
\begin{equation}\label{eq:1latency6}
    E[L_1]=\dfrac{E[W_i]}{R}
\end{equation}
and finally get \eqref{eq:1latency0}.
\end{proof}

By computing the first derivative of~\eqref{eq:1latency0} with respect to $R$, we can prove the following:
\begin{prop}
Let $\hat{R}_1$ be the optimal value of $R$ that minimizes $E[L_1]$ for the policy $1\text{-OutOf-}Q$. 
\begin{equation}\label{eq:1rhat1}
    \hat{R}_{1}=\dfrac{\mu Q}{2\lambda C}
\end{equation}
\end{prop}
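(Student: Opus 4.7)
The plan is to obtain $\hat{R}_1$ by standard first-order calculus applied directly to the closed form for $E[L_1]$ given in the preceding proposition. I would first rewrite the quantity to be minimized, treating $R$ as a positive real variable, as
\begin{equation*}
E[L_1](R) \;=\; \frac{1}{R\bigl(\mu - \lambda R C/Q\bigr)} \;=\; \frac{Q}{R\bigl(\mu Q - \lambda C R\bigr)}.
\end{equation*}
Since $Q>0$ and the denominator is strictly positive on the sustainability region (the condition $U<1$ is exactly $R < \mu Q/(\lambda C)$, i.e. the region where $\mu Q - \lambda C R > 0$), minimizing $E[L_1]$ is equivalent to maximizing the quadratic $g(R) = \mu Q R - \lambda C R^{2}$ on that interval.

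Next, I would compute $g'(R) = \mu Q - 2\lambda C R$ and solve $g'(R) = 0$, which yields the unique stationary point
\begin{equation*}
\hat{R}_1 \;=\; \frac{\mu Q}{2\lambda C}.
\end{equation*}
Because $g''(R) = -2\lambda C < 0$, the function $g$ is strictly concave, so this critical point is its global maximum and hence a global minimum of $E[L_1]$. A quick consistency check gives $U = \lambda \hat{R}_1 C /(\mu Q) = 1/2$, so $\hat{R}_1$ automatically lies strictly inside the feasibility region and no boundary argument on the sustainability side is needed.

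There is essentially no hard step here; the optimization is a one-variable calculus exercise, and the symbolic manipulation in setting up $g(R)$ is the only place where a factor or sign error could sneak in. The one caveat I would flag after the derivation is that $R$ is nominally integer-valued in $\{1,\ldots,Q\}$, so a practical implementation would round $\hat{R}_1$ to the nearest integer and clip to $[1,Q]$; since $g$ is a downward parabola symmetric around $\hat{R}_1$, rounding is optimal among integers and does not alter the continuous-relaxation statement of the proposition.
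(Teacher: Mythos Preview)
Your proposal is correct and follows essentially the same approach as the paper, which simply states that the result follows by computing the first derivative of $E[L_1]$ with respect to $R$. Your version is a more detailed execution of that same idea, including the concavity check and the rounding/clipping caveat that the paper also mentions immediately after the proposition.
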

In summary, the optimal number of EPs changes with $\lambda$. For low arrival rates, $R$ must be large to exploit the spatial diversity, without incurring additional overhead in the processing times. For high arrival rates, conversely, $R$ is small to reduce the load on the EPs. Notably, for the sake of readability, we omitted from~\eqref{eq:1rhat1} the clipping to the interval $[1, Q]$ and the rounding procedure to find the optimal integer value of $R$.
We can now extend the result of Proposition~\ref{lm:l1} to a generic OutOf policy.
\begin{prop}
Under a sustainable arrival rate of \en requests and a random selection policy with $R$ EPs, according to the \en policy $k\text{-OutOf-}Q$, it holds for the endorsement latency $L_k$:
\begin{multline}\label{eq:2latency0}
    E[L_k]=\dfrac{1}{\mu-\dfrac{\lambda CR}{Q}}\left(\sum_{i=0}^{k-1}\dfrac{1}{R-i}\right)\qquad\\
    R\in[k,\ldots,Q]
\end{multline}
\end{prop}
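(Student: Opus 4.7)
The plan is to mirror the argument of Proposition~\ref{lm:l1}, with the minimum of $R$ i.i.d.\ exponential waiting times replaced by the $k$th smallest among them. I would reuse two ingredients from the previous proof: the Poisson-splitting that keeps each EP an M/M/1 queue under uniformly random selection, and the exponentiality of the per-EP waiting time.

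First, I would observe that exactly as in Proposition~\ref{lm:l1}, the arrival rate at each EP is still $\lambda' = \lambda CR/Q$, and the waiting times $W_1,\ldots,W_R$ at the $R$ selected EPs are i.i.d.\ exponential with mean $1/(\mu-\lambda')$. Next, I would identify $L_k$ with the $k$th order statistic $W_{(k)}$: by the standard form of Proposition~\ref{pr:p1}, policy $k\text{-OutOf-}Q$ is satisfied as soon as $k$ distinct organizations, hence (under the one-EP-per-organization assumption of this section) $k$ distinct EPs, have returned their endorsements, and that instant is precisely $W_{(k)}$.

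The core computation then reduces to $E[W_{(k)}]$ for $R$ i.i.d.\ exponentials of rate $\alpha := \mu-\lambda'$. I would invoke the classical memoryless decomposition into independent spacings $S_i = W_{(i)} - W_{(i-1)}$ (with $W_{(0)}=0$), each exponential of rate $(R-i+1)\alpha$: at time $W_{(i-1)}$, the $R-i+1$ still-outstanding requests have, by the lack of memory, fresh i.i.d.\ exponential residuals, so the next completion is exponential of the stated rate. Summing $E[S_i] = 1/((R-i+1)\alpha)$ over $i=1,\ldots,k$ reproduces the bracketed sum in~\eqref{eq:2latency0}.

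The only genuinely non-routine step will be the identification of the \en latency with $W_{(k)}$; everything else is a textbook evaluation. The range $R\in[k,\ldots,Q]$ comes for free, since fewer than $k$ EPs cannot satisfy the policy and only $Q$ EPs exist in total.
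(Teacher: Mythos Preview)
Your proposal is correct and follows essentially the same route as the paper: identify $L_k$ with the $k$th order statistic of the i.i.d.\ exponential waiting times $W_1,\ldots,W_R$, decompose it into the independent exponential spacings between successive order statistics, and sum their means. The only cosmetic difference is the indexing convention---the paper sets $S_i=W_{(i+1)}-W_{(i)}$ and writes $E[L_k]=E[L_1]+\sum_{i=1}^{k-1}E[S_i]$, whereas you take $W_{(0)}=0$ and sum $S_i=W_{(i)}-W_{(i-1)}$ for $i=1,\ldots,k$; both reindex to the same sum $\sum_{i=0}^{k-1}1/(R-i)$.
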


\begin{proof}
Using the same definition of $W_i$ as adopted in the proof of Proposition~\ref{lm:l1}, we can define 
$L_k$ as the \en latency for the policy $k\text{-OutOf-}Q$. Now $L_k$ can be computed as the $k$th order statistic as follows, 
{$L_k=(W_1,W_2,\ldots,W_R)_{(k)}$}, recalling the fact that $W_i$ are i.i.d.\ and exponentially distributed, we can define $S_i$ as the time interval between the ordered version of the $W_i$ (i.e., $S_i=W_{(i+1)}-W_{(i)}$).
Thanks to the theory of order statistics~\cite{order}, $S_i$ is exponentially distributed with average:
\begin{equation}\label{eq:2latency3}
    E[S_i]=\frac{E[W_i]}{(R-i)}
\end{equation}
{By combining~\eqref{eq:1latency6} and~\eqref{eq:2latency3} we have:
\begin{equation}\label{eq:2latency4}
    E[L_k]=\sum_{i=1}^{k-1}E[S_i]+E(L_1)
\end{equation}
simplified to:
\begin{equation}\label{eq:2latency5}
    E[L_k]=\sum_{i=0}^{k-1}\frac{E[W_i]}{(R-i)}\qquad R\in[k,\ldots,Q]
\end{equation}
and we get~\eqref{eq:2latency0}.}
\end{proof}

The optimal value of $\hat{R}$ can be computed analytically as well.
We impose sustainable request arrivals, i.e., {$U < 1$}, {\em for any $R$} to guarantee sustainable arrivals 
also in the case $R=Q$, it must hold {$\lambda<\mu/C$}. Thus, we can set: 
\begin{equation}\label{eq:lambda3}
    \lambda=\gamma\dfrac{\mu}{C}
\end{equation}
with $\gamma\in(0,1)$ being the load factor. By substituting \eqref{eq:lambda3} into~\eqref{eq:1rhat1}, we can obtain the optimal number of EPs for $1$-Out-Of-$Q$ policy as:
\begin{equation}\label{eq:1rhat3}
    \hat{R}_{1}=\dfrac{Q}{2\gamma}
\end{equation}
We can repeat the same derivation also for $\hat{R}_k$, i.e., for a generic $k$-OutOf-$Q$ policy. 

\subsection{Numerical evaluation}

 In \cref{fig:Opt-R} we reported the \en latency computed in the function of $\gamma$ and $R$, obtained by 
 substituting~\eqref{eq:lambda3} in~\eqref{eq:2latency0}.
%
\begin{figure*}[!tb]
  \centering  \includegraphics[width=1\linewidth]{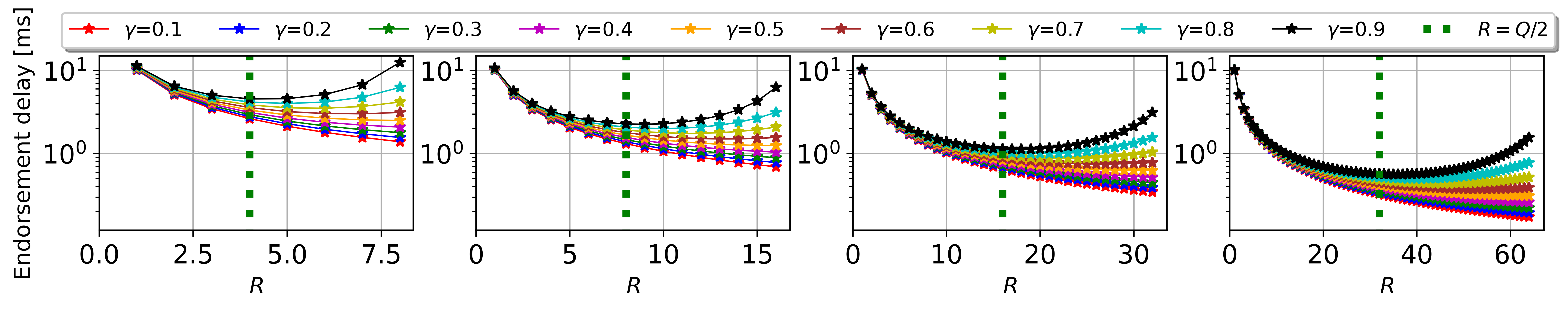}
  \caption{Endorsement latency where $Q \in [8,16,32,64]$ (left to right). The green line represents $R=Q/2$.
  }
  \label{fig:Opt-R}
\end{figure*}
As expected, we observe a minimum \en latency obtained with $R=R_{k}$, as computed analytically, which depends on the load $\gamma$.
Due to the difficulty to estimate the load in practical scenarios (which may not be stationary), for $k=1$, we propose heuristically choosing $R=Q/2$ as a sub-optimal redundancy in our proposed approach, discussed in the following. This choice is robust since it is optimal at high load and at low load the latency increase is limited. Indeed, for $\gamma=0.5$ the increase is no more than $8\%$ compared to the optimal value, and for $\gamma=0.1$ no more than $12\%$. Thus for $k=1$, $R=Q/2$ appears to be a practical solution, which will be exploited when devising online EP selection algorithms in Sec.~\ref{sec:opt}.


The redundancy effect can be limited due to the number of organizations/EPs or applied endorsement policies, as they affect the number of statements generated by Proposition~\ref{pr:p2}. With fewer final statements, there would be less space for redundancy. Indeed, systems with less restrictive and less complex endorsement policies (e.g., majority policies) benefit more from redundancy, while organizations benefit from adopting more  EPs to increase reliability. 

\tagged{trp,jrn}
\section{Practical endorsers selection algorithms}\label{sec:opt}

Now, we concentrate on the $1$-OutOf-$Q$ policy, since it is coherent with the standard form of any \en policy. Without loss of generality, we assume just one client in the system ($C=1$). We assume that the client is aware of all needed information including available/most-updated EPs, thanks to the configuration query request, as shown in Fig.~\ref{fig:phase}, which leverages the available service discovery process.

We propose an optimization procedure to select the EPs, denoted as \texttt{OPEN}, whose main goal is to minimize the \en response delay. \texttt{OPEN} considers the past response delays experienced by the previously selected EPs and selects the EPs with the lowest delays. 
This choice is motivated by the high temporal correlation between the response delays of an EP, due to queueing in the network and in  the EPs.
Notably, the history is meaningful only for recently selected EPs, otherwise, it is obsolete. Therefore, it is possible that a highly loaded EP which was not recently requested becomes among the least loaded ones and is worth again sending the request to it. To address this, \texttt{OPEN} probes non-selected EPs by sending gratuitous \en requests, which are still considered in the evaluation of the \en policy. 
Furthermore, in \texttt{OPEN} pending requests are considered indicators of possibly congested EPs, which are chosen at a lower priority. 



The pseudocode of \texttt{OPEN} is provided in Fig.~\ref{alg:OPEN}.
Let TX$^{n}$ be the transaction with sequence number $n$, evaluated locally at the client.
Let $x^n_p$ be the measured response delays of TX$^{n}$ for any EP $p\in\mathcal P$.
Let $\mathcal P_e^n$ be the set of selected EPs for TX$^n$. 
For each transaction, we initialize all EPs as eligible to be selected (ln.~\ref{lst:2}). Just for the first transaction, \texttt{OPEN} initializes the history of response delays to a dummy value and selects all EPs as selected endorsers (ln.~\ref{lst:3}-\ref{lst:6}). For a generic transaction, all response delays are initialized to a dummy value (ln.~\ref{lst:7}-\ref{lst:9}). Then the EPs are selected based on a procedure described in the next paragraph (ln.~\ref{lst:10}). 
Now \texttt{OPEN} sends the \en request for TX$^n$ to the computed set of EPs (ln.~\ref{lst:11}) and updates the measured delays (ln.~\ref{lst:12}). A new instance of the procedure would start if a new transaction TX$^{n+1}$ is generated. Note that the procedure ends when all the responses are received.

\begin{figure}[!tb]
\scriptsize
\begin{algorithmic}[1]
\renewcommand{\algorithmicrequire}{ Input:}
\renewcommand{\algorithmicensure}{ Output:}
\Procedure{OPEN}{$n$}\Comment{Process TX$^n$}
\State $e^{n}_p \gets$ {\bf true}\label{lst:2}, $\forall p\in\mathcal P$ \Comment{Init the eligibility vector for $\text{TX}^{(n)}$}
\If {$n=1$} \label{lst:3}\Comment{Just for the first transaction}
  \For{$p \in \mathcal P$}\label{lst:4}
    \State $x^{0}_p\gets x^{1}_p \gets -1$\label{lst:5} \Comment{Init the response delay history}
  \EndFor
  \State $\mathcal P_e^1\gets \mathcal P$ \label{lst:6} \Comment{Select all the available peers}
\Else \label{lst:7} \Comment{Consider a generic transaction}
\For{$p \in \mathcal P$}\label{lst:8}
  \State $x^{n}_p \gets -1$\label{lst:9} \Comment{Init the measured delays for $\text{TX}^{(n)}$}
\EndFor
\State $\mathcal P_e^n\gets$ {\tt Select-Endorsers}()\label{lst:10} 
\EndIf
\State {\tt Send-Endorsement-Requests}(TX$^n, \mathcal P_e^n$)\label{lst:11}
\State $X^n\gets$ {\tt Update-Response-Delays}()\label{lst:12}
\EndProcedure
\end{algorithmic}
\caption{Pseudocode of the \texttt{OPEN} algorithm for TX$^n$}\label{alg:OPEN}
\end{figure}

We now discuss how \texttt{Select-Endorsers} function operates.
Inspired by our previous result in~\eqref{eq:1rhat3}, it selects $|\mathcal P|/2$ EPs chosen among the ones that experienced the lowest response delays, based on the measures for the last transaction TX$^{n-1}$. The choice is challenging when one or more responses are still pending for TX$^{n-1}$, and the algorithm key idea is that the corresponding EPs are considered as congested and thus should not be selected for the current transaction TX${^n}$.

The pseudocode is reported in Fig.~\ref{alg:se}. 
It calculates the maximum delay measured for TX$^{n-1}$ (ln.~\ref{se2}). For each EP in $\mathcal P_e^{n-1}$ such that the response is not received yet, we mark the corresponding EP as non-eligible (ln.~\ref{se3}-\ref{se5}). There are two cases. The first case is the special one in which no responses have been received for TX$^{n-1}$, thus the algorithm speculates the delay equal to the delay of TX$^{n-2}$ (ln.~\ref{se6}-\ref{se7}). The eligibility assigned to the EPs will lead to selecting the other $|\mathcal P|/2$ EPs compared to the previous ones.
The second case is the typical one in which at least some responses have been received for TX$^{n-1}$ (ln.~\ref{se8}). For the EPs used in TX$^{n-1}$ and for which no response has been already received, the speculated delay is equal to the maximum delay $d_{\max}$ plus some constant $\epsilon$, chosen enough small to be negligible compared to the average network and processing delays (e.g., 1~ns) (ln.~\ref{se9}). This will model the fact that the actual delay is unknown, but for sure it is strictly larger than $d_{\max}$. Finally, for all the other EPs, not used for TX$^{n-1}$, the delays are speculated to be equal to $X^{n-2}$ (ln.~\ref{se10}-\ref{se11}). Now, the EPs are sorted based on the $X^{n-2}$ delay values and the half best will be selected (ln.~\ref{se12}). A random EP from not selected ones will be chosen as the gratuitous probe EP (ln.~\ref{se13}). The slowest EP from $\mathcal P_e^n$ will be replaced with the gratuitous probe EP (ln.~\ref{se14}), and $\mathcal P_e^n$ will be returned to the main OPEN process (ln.~\ref{se15}).

\begin{figure}[!tb]
\scriptsize
\begin{algorithmic}[1]
\Procedure{SelectEndorsers}{~}
  \State $d_{\max}=\max_{p\in\mathcal P_e^{n-1}}\{x_p^{n-1}\}$\label{se2} \Comment{Max measured delay  for TX$^{n-1}$}
  \For {$p \in \mathcal P_e^{n-1}$}\label{se3} \Comment{For EPs used for $\text{TX}^{n-1}$}
    \If {$x_p^{n-1}= -1$}\label{se4} \Comment{Not yet response from EP $p$}
      \State $e_p^{n} \gets$ {\bf false} \label{se5}\Comment{Make the EP Not-eligible for $\text{TX}^{n}$}
      \If{$d_{\max}=-1$}\label{se6}\Comment{No delay measured for $\text{TX}^{n-1}$}
        \State $x_p^{n-1} \gets x_p^{n-2}$ \label{se7} \Comment{Use past delays}
      \Else\label{se8}
        \State $x_p^{n-1} \gets d_{\max}+\epsilon$ \label{se9}\Comment{Speculate the delay}
      \EndIf
    \EndIf
  \EndFor
  \For {$p \in \mathcal P\setminus \mathcal P_e^{n-1}$}\label{se10} \Comment{For EPs not used for $\text{TX}^{n-1}$}   
    \State $x_p^{n-1} \gets x_p^{n-2}$ \label{se11} \Comment{Use past delays}
  \EndFor
  \State $\mathcal P_e^{n} \gets$ {\tt Eligibile-EPs-with-min-delay}($|\mathcal P|/2$,$X^{n-1}$)\label{se12}
  \State $p\gets$ {\tt Random-EP}($\mathcal P\setminus (\mathcal P_e^n \cup  \mathcal P_e^{n-1})$) \label{se13} \Comment{Select probe EP}
  \State $\mathcal P_e^{n} \gets$ {\tt Replace-slowest-EP}($\mathcal P_e^n, p$)\label{se14}\Comment{Embed the probe EP}
  \State \Return $\mathcal P_e^n$ \label{se15} \Comment{Selected EPs augmented with the probe EP}
\EndProcedure
\end{algorithmic}
\caption{Pseudocode for {\tt SelectEndorsers} }\label{alg:se}
\end{figure}

\tagged{trp,jrn}
\section{Performance evaluation}\label{sec:simu}

\subsection{Methodology}\label{sec:met}
We developed an event-driven simulator using OMNeT++~\cite{omnet}. 
We considered a scenario with $C=8$ clients and $Q=8$ organizations, each of them with $1$~EP, thus $|\mathcal P|=Q$. The endorsement requests are generated according to a Poisson process at each client and we set the normalized load $\gamma \in [0.1,0.9]$. Then fixing $\gamma = 0.5$, we considered more scenarios by varying $Q \in \{8,16,32,64\}$, each organization with the number of $\text{EP}\in\{1,2,4,8\}$, and $C \in \{8,40,125,1000,8000,32000\}$ clients; in each scenario we fixed all parameters except one.
To understand the performance under non-stationary requests, we also considered a Poisson-modulated process with squared-wave cyclo-stationary load, with a period equal to 1200~ms, duty cycle 50\%, and normalized load $\gamma = 0.5$. 
To consider the effect of different kinds of computation, we assume the computation time of each EP to be either exponentially distributed\tagged{trp}{, bi-modal distributed, mixed bi-modal with exponential distributed or log-normally distributed}\tagged{jrn}{ or bi-modal distributed} with an average equal to $10$~ms, whose value has been achieved from our practical measurements in HF EPs. In the bi-modal case, we assumed that, with a given probability, the computation time is 
constant with the value $1/\mu_1$, otherwise its value is $1/\mu_2$.
\tagged{trp}{In the mixed bi-modal with the exponential case, we assumed that, with a given probability, the computation time is exponentially distributed and has an average of $1/\mu_1$, otherwise its average is $1/\mu_2$. In the log-normal case, we assumed that, with a given probability, the computation time is In the log-normally distributed scenario, the average computation time is equal to $10$~ms, while the standard deviation is changing.}
Table~\ref{table:cvEXP} 
shows the coefficient of variations (Cv) for 
\tagged{trp}{the adopted setting for the log-normal and both of the bi-modal cases.}
\tagged{jrn}{the adopted setting.}
To model the heterogeneity in the computing power and resources of the EPs, we considered a {\em non-homogenous scenario} in which we assigned different average computation times to different EPs (i.e., ($2,4,6,8,12,14,16,18$)~ms) where the computation time of each EP is exponentially distributed. 
\tagged{jrn}{
\begin{table}[!tb]
    \centering
    \caption{Settings for different distributions of the computation time with different coefficient of variation (Cv).}
    \begin{tabular}{|c||c|c|c|c|c|}
        \hline
         Cv (Bimodal) & $0.0$ & $0.5$ & $1$ & $2$ & $5$ \\ 
        \hline\hline
         $P(\mu=\mu_1)$  & $0.5$ & $0.6$ & $0.75$ & $0.9$ & $0.98$ \\ 
         \hline
         {$1/\mu_1$~[ms]} & $10.0$ & $5.9$ & $4.2$ & $3.4$ & $2.85$ \\ 
         {$1/\mu_2$~[ms]} & $10.0$ & $16.1$ & $27.3$ & $70.0$ & $360.0$ \\
        \hline
    \end{tabular}
    \label{table:cvEXP}
\end{table}}
\tagged{trp}{
\begin{table}[!tb]
    \centering
    \caption{Settings for different distributions of the computation time with different coefficient of variation (Cv).}
    \begin{tabular}{|c||c|c|c|c|c|}
        \hline
         Cv (Bimodal) & $0.0$ & $0.5$ & $1$ & $2$ & $5$ \\ 
        \hline
         $P(\mu=\mu_1)$  & $0.5$ & $0.6$ & $0.75$ & $0.9$ & $0.98$ \\ 
         {$1/\mu_1$~[ms]} & $10.0$ & $5.9$ & $4.2$ & $3.4$ & $2.85$ \\ 
         {$1/\mu_2$~[ms]} & $10.0$ & $16.1$ & $27.3$ & $70.0$ & $360.0$ \\
        \hline
        \hline
         Cv (log-normal) & $0.001$ & $0.5$ & $1$ & $2$ & $5$ \\ 
        \hline
         {$\mu_Z$} & $2.3025$ & $2.191$ & $1.956$ & $1.498$ & $0.673$ \\ 
         {$\sigma_Z$} & $0.0010$ & $0.472$ & $0.832$ & $1.268$ & $1.805$ \\
        \hline
    \end{tabular}
    \label{table:cvEXP}
\end{table}}

We considered three scenarios for the network model, two of them are synthetic and the last one is real. Let $d_{cp}$ be the network delay between client $c$ and EP $p$. In the first scenario, denoted as S1, the network delays are negligible compared to the processing times at the EP, i.e., $d_{cp}=0$~(Fig.~\ref{fig:S1}). In the second scenario, denoted as S2, we set linearly increasing delays between any client and the EPs, similarly to a linear topology where all clients are closer to the first EP, i.e., $d_{cp}=(p+1/2)$~ms for $p\in [1, Q]$. This implies similar delays from each EP to any client while on average the total network delays are comparable to the processing times at the EPs~(Fig.~\ref{fig:S2}). 
\begin{figure}[!tb]
\minipage{0.5\columnwidth}
  \centering
  \includegraphics[width=0.7\textwidth]{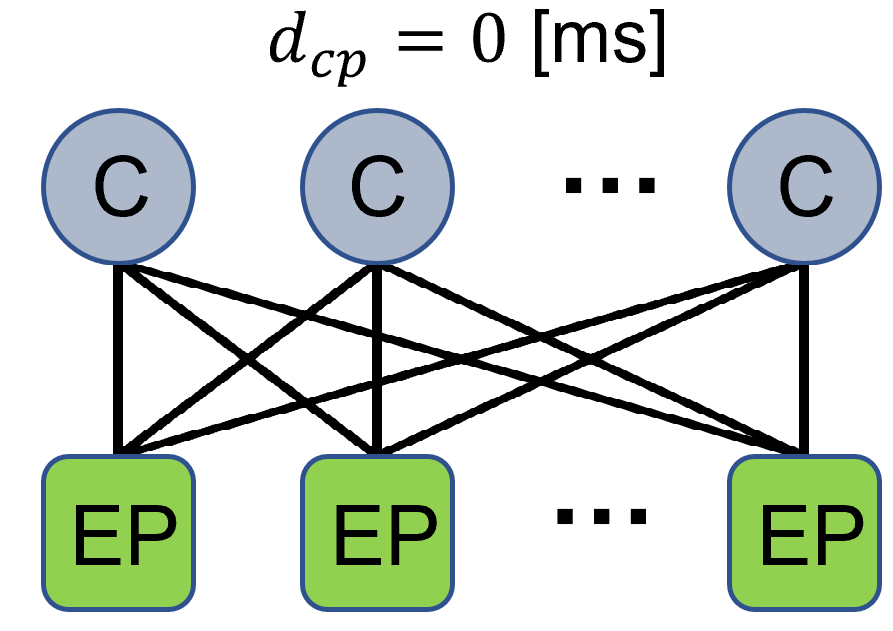}
  \subcaption{S1 (bipartite)}
  \label{fig:S1}
\endminipage
\minipage{0.5\columnwidth}
  \centering
  \includegraphics[width=0.7\textwidth]{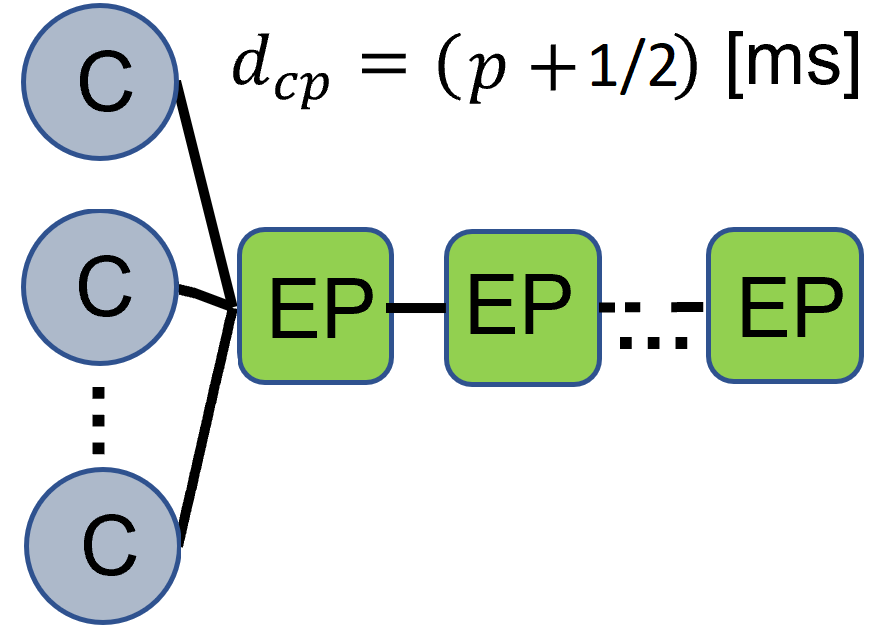}
  \subcaption{S2 (linear)}
  \label{fig:S2}
\endminipage
\caption[width=1\columnwidth]{{The two synthetic network topologies adopted for test scenarios in our simulations.}}
\label{figure:topos}
\end{figure}

In the third scenario, denoted as S3, we selected the \textit{Highwinds} network from~\cite{intTopoZoo}, shown in Fig.~\ref{fig:highwinds}, as a real world-wide scenario where the link delays are calculated based on the physical distance between the geographical position of the nodes (using the {Haversine formula}) and the propagation speed is $2/3$ the speed of light. The clients here can be divided into two groups: (i) {\em far clients} placed in nodes $1,7,8$, and (ii) {\em centered clients} placed in nodes $2,3,4,5,6$.
\begin{figure}[!tb]
  \centering
  \includegraphics[width=8cm]{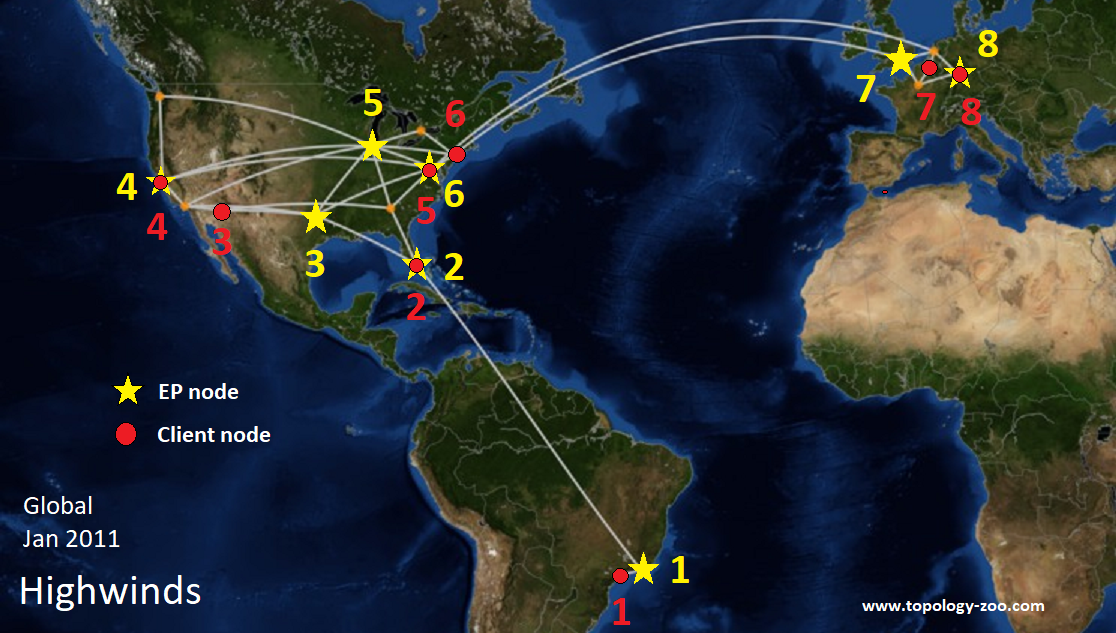}
  \caption{{Real network topology (S3) showing the EPs and clients placement with interconnecting network topology}}
  \label{fig:highwinds}
\end{figure}


We measure the average \en latency as the main performance metric. The \en latency is calculated from the moment the \en request is sent out from the client until the first response is received by the client. 
For comparison, we considered three EP selection algorithms, namely {RND}, {OOD}, and {DSLM}, where the first two are proposed by us. 

\subsubsection{Random EPs (RND)}
{RND} is the policy adopted in the analytical model of Sec.~\ref{sec:mod}. Every \en request is sent to $R$ randomly chosen EPs. 
If $R=Q/2$,  the policy is denoted as {\em RND-half}. If $R$ adapts to the load according to the rule $R=Q/(2\gamma)$, as suggested by~\eqref{eq:1rhat3}, we denote the policy as {\em RND-load}.

\subsubsection{Dynamic Stochastic Load Minimization (DSLM)}
Dynamic Stochastic Load Minimization (DSLM) was proposed in~\cite{soa} and the pseudocode of the version adapted to our system model is  
shown in Fig.~\ref{alg:dslm2}. 
Just for the first TX, {DSLM} initializes the load $l_p$ and the measured response delay $x_p^0$ of any EP $p$ (ln.~\ref{d2}-\ref{d4}). Typically, it randomly selects half of the EPs~(ln.~\ref{d5}) and evaluates heuristically the load on each selected EP by the product of the square root of the response delay and the corresponding queue length~(ln.~\ref{d6}-\ref{d7}). The average is obtained with an exponential moving average with parameter $\alpha$. Finally, {DSML} returns the EP with the lowest estimated load among the selected ones~(ln.~\ref{d20}).
\begin{figure}[!tb]
\scriptsize
\begin{algorithmic}[1]
\Procedure{DSLM}{$n$}\Comment{Process TX$^n$}
  \If{$n=1$} \label{d2} \Comment{Just for the first TX}
      \State $l_p \gets \bar{x}_p \gets x_p^0 \gets 0, \forall p \in \mathcal P$ \label{d4} \Comment{Init EP load and delay values}
  \EndIf
  \State $\mathcal P_{h} \gets \texttt{Select-}|\mathcal P|/2$\texttt{-random-peers}\label{d5}\Comment{Random half EPs}
  \For {$p \in \mathcal P_{h}$}\label{d6}
     \State $\bar{x}_p \gets [ \alpha \bar{x}_p+(1-\alpha) x_p^{n-1}]$
    \label{d7} \Comment{Average delay}
  \EndFor
\State \Return $\arg\min_{p \in \mathcal P_h} \{({\bar{x}_p^{0.5}}+1)q_p\}$\label{d20}\Comment{Choose min product delay queue length.} 
\EndProcedure
\end{algorithmic}
\caption{Pseudocode for {DSLM} adapted to our model}\label{alg:dslm2}
\end{figure}

\subsubsection{Oracle Optimal Delays (OOD)}

As a reference for all the \en algorithms, we define an online Oracle-based Optimal Delays (OOD) EP selection policy that minimizes the \en latency given a fixed replication factor $R$, 
denoted as OOD-$R$. The pseudocode of {OOD-$R$} is provided in Fig.~\ref{alg:ood}. We assume an oracle that knows in advance the response delay of any \en request if sent to a specific EP. Thus, the oracle knows for any EP $p$: (i) the absolute time $t^\text{busy}_p$ at which the EP will finish (or has finished) to serve the last received \en request TX$^{n-1}$, (ii) the processing time $\tau_p^\text{proc}$ of the \en request TX$^{n}$, and (iii) the overall network delay $\tau^\text{net}_p$ between each client and the EP. 
{Thus, if sent to EP $p$, the response to TX$^n$ will be received from EP $p$ at a predicted time $t^\text{resp}_p$~(ln.~\ref{o4p}) equal to:
\begin{equation}\label{eq:oracle}
    t^\text{resp}_p= \max\{t^\text{now}+\tau^\text{net}_p,t^\text{busy}_p\}+\tau_p^\text{proc}+\tau^\text{net}_p
\end{equation}
since if at the time of arriving the request to an EP its queue is empty, then the request will be served at $t^\text{now}+\tau^\text{net}_p$, otherwise at $t^\text{busy}_p$. Then, the request will be processed for $\tau_p^\text{proc}$ and the response will be sent back, experiencing $\tau^\text{net}_p$ delay. 
Now OOD-$R$ chooses the EP with the smallest predicted time to minimize the response delay~(ln.~\ref{o5}).
The remaining $(R-1)$ \en requests (if any) will be sent to the EPs in decreasing order of predicted time (ln.~\ref{o7p}). This allows to load the ``slowest'' EPs with requests whose responses will be received late and thus reduces the load on the ``fastest" EPs, for the sake of future \en requests. 

It should be noted that, in the case of RND-load, the request arrival is assumed to be stationary, thus, the system load can be estimated with high accuracy.
Also, OOD is implementable with enough control information, but obtaining this information would need instantaneous communication with the EPs, which is challenging to accomplish in a practical situation.
 So, both algorithms are not practical in a real scenario.
\begin{figure}[!tb]
 \scriptsize
 \begin{algorithmic}[1]
 \Procedure{OOD-$R$}{$n$} \label{o1p}\Comment{Process TX$^n$}
 \For {$p \in \mathcal P$}\label{o2p}\Comment{For each EP}
    \State $ t^\text{resp}_p= \max\{t^\text{now}+\tau^\text{net}_p,t^\text{busy}_p\}+\tau_p^\text{proc}$
   \label{o4p}\Comment{Compute response delay}
 \EndFor
 \State $p=\arg \min_{p\in\mathcal P} \{ t^\text{resp}_p\} $ \label{o5}\Comment{Choose the min response delay EP}
 \State $\mathcal P_e=
 \texttt{Find-}(R-1)\texttt{-EPs-with-largest-}\{t_p^\text{resp}\}$\label{o7p}
 \State \Return $\{p\} \cup \mathcal P_e $\label{o18}\Comment{Return endorsement set}
\EndProcedure
\end{algorithmic}
\caption{Pseudocode of OOD-$R$  }\label{alg:ood}
\end{figure}
}




\subsection{Simulations results}
For a fair comparison between {OOD} and other approaches, in all test scenarios 
we selected OOD-half, i.e., with the same $R$ as OPEN, and RND-half, and slightly smaller $R$ than RND-load, 
for which $R\in [Q/2,Q]$. 
Only DSLM has a completely different redundancy factor ($R=1$).



\subsubsection{Homogenous scenario}

The left graphs in Fig.~\ref{figure:avgDelay} show the simulation results for a homogenous scenario with all EPs with computation times that are exponentially distributed with the same average.

\begin{figure*}[!tb]
  \includegraphics[width=\linewidth,height=10cm]{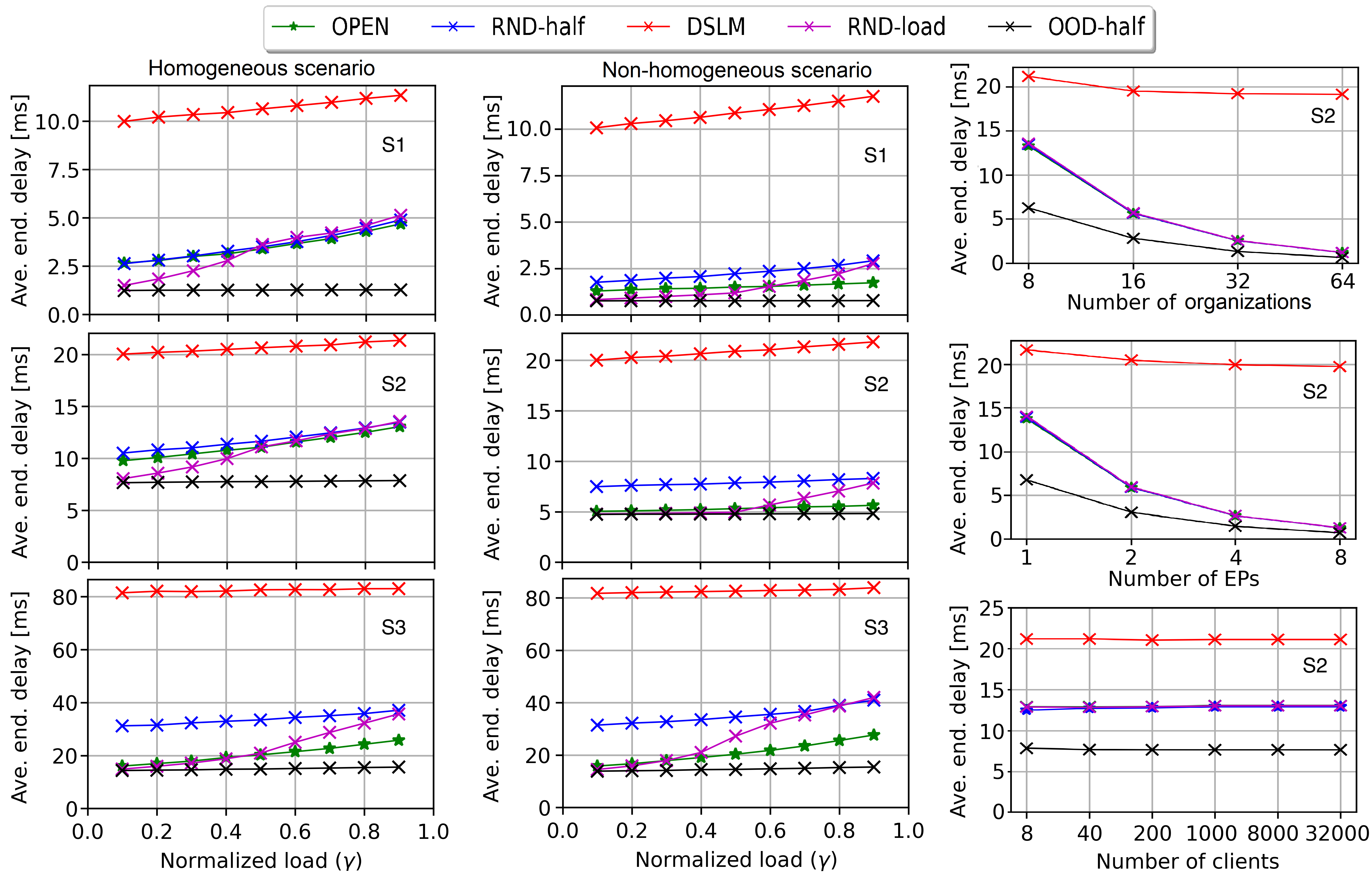}
  \caption{Average endorsement delay for i) average computation time of~10~ms for each EP: S1 (left-up), S2 (left-middle), S3 (left-down), ii) different average computation times from~[2 to 18]~ms for each EP: S1 (middle-up), S2 (middle-middle), S3 (middle-down), and iii) the different number of organizations, EPs, and clients in scenario S2, for $\gamma=0.5$ (right-most graphs).}
  \label{figure:avgDelay} 
\end{figure*}

In scenario S1 (left-up), all delays are purely due to processing in the EPs. Since {DSLM} does not exploit redundancy and for $\gamma=0.1$ the queuing at the EP is negligible, its response delay is around 10~ms, equal to the computation time. By increasing the load and hence the queueing, the average delay increases slightly. Instead, by exploiting the redundancy all the other approaches can get smaller delays, by a factor of $2 \text{ to } 6$. 
As expected, {OOD-half} achieves the best average \en latency among all solutions. At low loads, due to the maximum redundancy factor (i.e., $R=8$), {RND-load} performs closer to {OOD-half} by always having the fastest EP among its selection. By increasing the arrival rate, for both {RND-half} and {RND-load}, the \en latency increases as the selection of EPs is not efficient as {OOD-half}, which knows in advance the best EP.
At high loads, for {RND-load}, $R$ is almost 4, hence {RND-load} shows similar results to {RND-half}. 
 {OPEN} has a redundancy factor $R=4$, as {RND-half}, but selects EPs with smaller estimated delays. At low loads, {OPEN} has a small advantage over {RND-half}, as the queueing is almost negligible.
As the offered load increases, the higher queueing makes OPEN more efficient, also thanks to the higher frequency by which the response delays are estimated.

In scenario S2 (left-middle), as expected, {OOD-half} is the best algorithm, and {DSLM} is outperformed by all other solutions by a factor of $2 \text{ to } 3$. Due to the linearly increasing network delays, the effect of redundancy in EPs becomes less dominant, so the delay's improvement in scenario S2 is less than in S1. 
On the other hand, at low loads, {OPEN} acts slightly better than {RND-half} compared to S1, thanks to being aware of the network delays. At high loads, OPEN behaves close to RND-half since the queueing delays become dominant to network delays.

In scenario S3 (left-down), again {OOD-half} is the best approach, {DSLM} is outperformed by all other solutions by at least a factor of $2$. Due to the different network delays, on average much larger than the computation times, the redundancy is less effective, thus a lower delay improvement is experienced in S3 compared to S2, and S1. {OPEN} performs quite similarly to RND-load in low loads even with a half number of selected EPs, and much better in high loads. {OPEN} completely outperforms RND-half in all loads since it exploits mainly the EP with lower network delays.

\subsubsection{Non-homogenous scenario}
The simulation results for a non-homogenous scenario are reported in the middle graphs of Fig.~\ref{figure:avgDelay}. 
As a reminder, now the average computation times for the EPs are different, but the overall average is the same as in the homogenous scenario.
In all three scenarios S1, S2, and S3, as DSLM is not able to exploit redundancy, it is not able to reduce its average latency. On the other hand, by exploiting redundancy, all other approaches can reduce their average latency, where OOD-half achieves the lowest latency thanks to its global knowledge of the system.

In scenario S1, RND-load reduces the latency more than RND-half, since the higher redundancy factor increases the chance of selecting EPs with lower average computation times. With the same redundancy factor as RND-half, OPEN reduces the most the delays for all loads, as it employs the latency history to select EPs with lower average computation times.
In scenario S2, a similar behavior as in S1 is observed for all the algorithms. OPEN, by exploiting the delay history comprising both the average computation times and the network delays, achieves the best performance by almost approaching OOD-half. 
In scenario S3, we observe almost similar results as in scenario S3 of the homogeneous case, as the variation in the average computation times is still negligible to the average network delays.

\subsubsection{Scaling the number of organizations, EPs, and clients.}
The simulation results for larger scenarios are shown in Fig.~\ref{figure:avgDelay} (right). 
We consider the S2 scenario, to get a heterogeneous system in terms of network delays, and we fixed $\gamma = 0.5$.

By increasing the number of organizations, the overall number of EPs increases, thus the \en latency is reduced for all the algorithms exploiting redundancy, as shown in Fig.~\ref{figure:avgDelay} (right-up). The same behavior is observed when the number of EPs in each organization increases (see Fig.~\ref{figure:avgDelay} (right-middle)). The similarity with the previous graph is that we are considering the $1$-OutOf-$N$ policy here, which by recalling~\eqref{pr:p2}, for this \en policy there is no difference between two EPs of the same organization or different EPs of different organizations. 

According to Fig.~\ref{figure:avgDelay} (right-down), changing the number of clients has no effect on the approaches. Note that increasing the number of clients will reduce the efficiency of the information gained by OPEN and it will converge to the RND-half results for homogeneous cases with less dominant network delays.

\subsubsection{Bi-modal computation times}

\tagged{trp}{The simulation results for constant bi-modal are shown in Fig.~\ref{figure:Cv1d0} and Fig.~\ref{figure:Cv1d1}.}
\tagged{jrn}{The simulation results are shown in Fig.~\ref{figure:Cv1d}.}
In scenario S1, for constant computation time (Cv=$0$), the redundancy is not beneficial for delay reduction, while at high loads it can increase the EPs’ queue length and thus the delay. For larger Cv, all the algorithms, except DSLM, decrease the average endorsement delay. This is because the average of the minimum between a sequence of i.i.d. random variables is smaller when the variance is larger. All the solutions, except for DSLM, behave similarly for low and high loads.

In scenario S2, also for Cv=$0$, the redundancy reduces the average delay. The reason is that DSLM considers the computation load at the EPs obliviously of the network delays, which are dominating the computation times. But, in the other approaches, redundancy increases the chance of selecting the EP with lower network delays. By increasing Cv, redundancy can reduce the latency even more, by benefiting from the variability in the computation times. At low load ($\gamma = 0.2$), OPEN performs quite well as it also selects EPs with lower network delays. RND-load is performing slightly better as it sends to all EPs. OOD-half is even better than RND-load with a small margin, thanks to the lower load guaranteed by setting $R = 4$. At high load ($\gamma = 0.8$), RND-load adopts $R = 5.7$ (on average) and the corresponding queueing penalizes the overall response delay. OPEN acts slightly better thanks to the smaller value of $R$.

In the S3 scenario, all approaches are not affected by Cv, as the variability in the computation times is compensated by the network delays which vary between $0$~ms and $7$~times the average computation time. At low load ($\gamma = 0.2$), OPEN selects closer EPs in terms of network delays and outperforms RND-half by a factor greater than $2$, while being very close to OOD-half. RND-load achieves the same results as OPEN by selecting all the EPs (i.e., $R = 8$), which include the closest EP as well. At high load ($\gamma = 0.8$), as in scenario S2, RND-load is penalized by the queueing. OPEN reduces the endorsement delay up to $70\%$ compared to DSLM. Notably, differently from OPEN, RND-load may not select the closest EPs. As expected, for both loads OOD-half performs the best, since it always selects the minimum combination of the network delay and the processing delay.
\tagged{trp}{
\begin{figure*}[!tb] 
    \includegraphics[width=\linewidth,height=4cm]{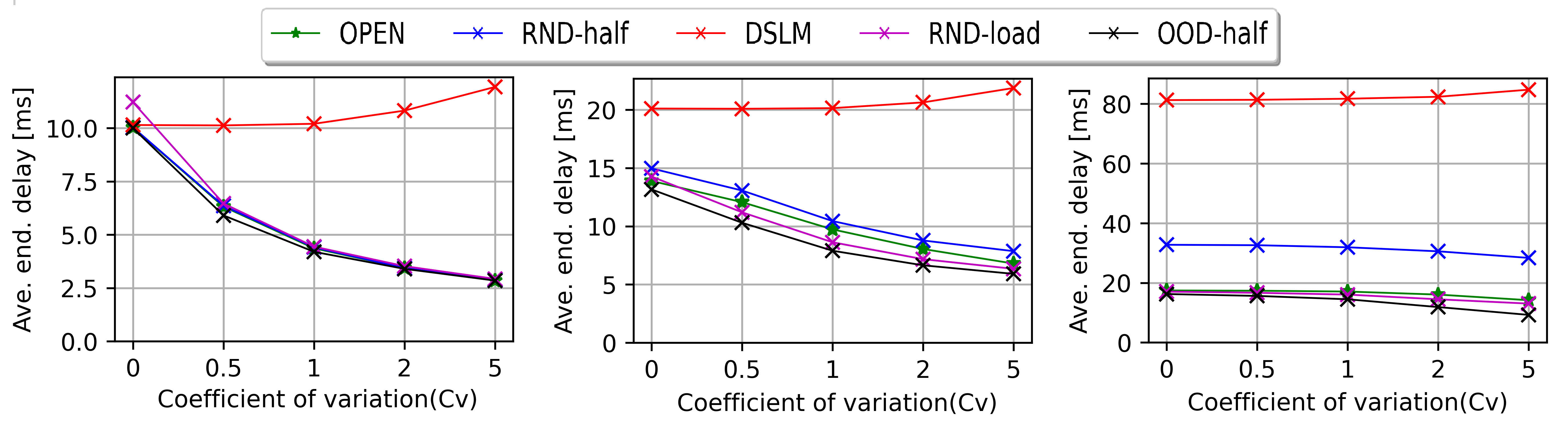} 
  \caption{Average \en latency under bimodal computation times for normalized load $\gamma=0.2$ and for different scenarios: S1 (left), S2 (middle), S3 (right)
  .}
  \label{figure:Cv1d0}
\end{figure*}
\begin{figure*}[!tb] 
    \includegraphics[width=\linewidth,height=4cm]{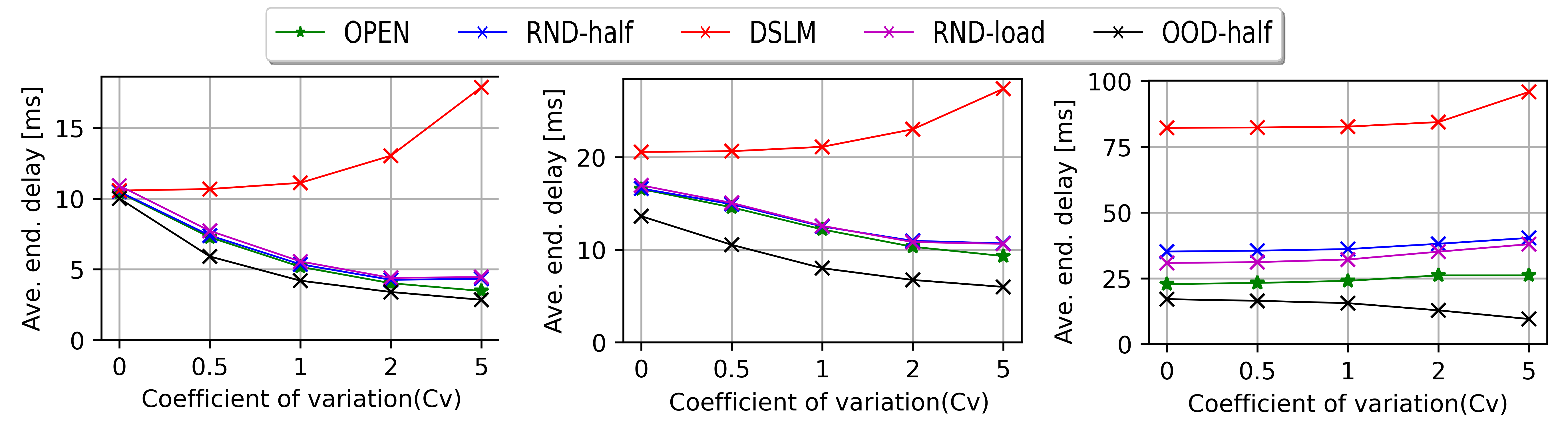} 
  \caption{Average \en latency under bimodal computation times for normalized load $\gamma=0.8$ and for different scenarios: S1 (left), S2 (middle), S3 (right)
  .}
  \label{figure:Cv1d1}
\end{figure*}
}
\tagged{jrn}{
\begin{figure}[!tb] 
    \includegraphics[width=\columnwidth,height=8cm]{Figures/S012_EP1_C1_Q8servEXP(10)Bimodal(0-5-10-20-50)all-vertical (1).png} 
  \caption{Average \en latency under bimodal computation times: i) normalized load: $\gamma=0.2$ (left), $\gamma=0.8$ (right), ii) scenarios: S1 (up), S2 (middle), S3 (down)}
  \label{figure:Cv1d}
\end{figure}
}
%
%
\tagged{trp}{
\subsection{Mixed bi-modal with exponential computation times}
The simulation results for constant bi-modal are shown in Fig.~\ref{figure:Cv1d2} and Fig.~\ref{figure:Cv1d3}.
\begin{figure*}[!tb] 
    \includegraphics[width=\linewidth,height=4cm]{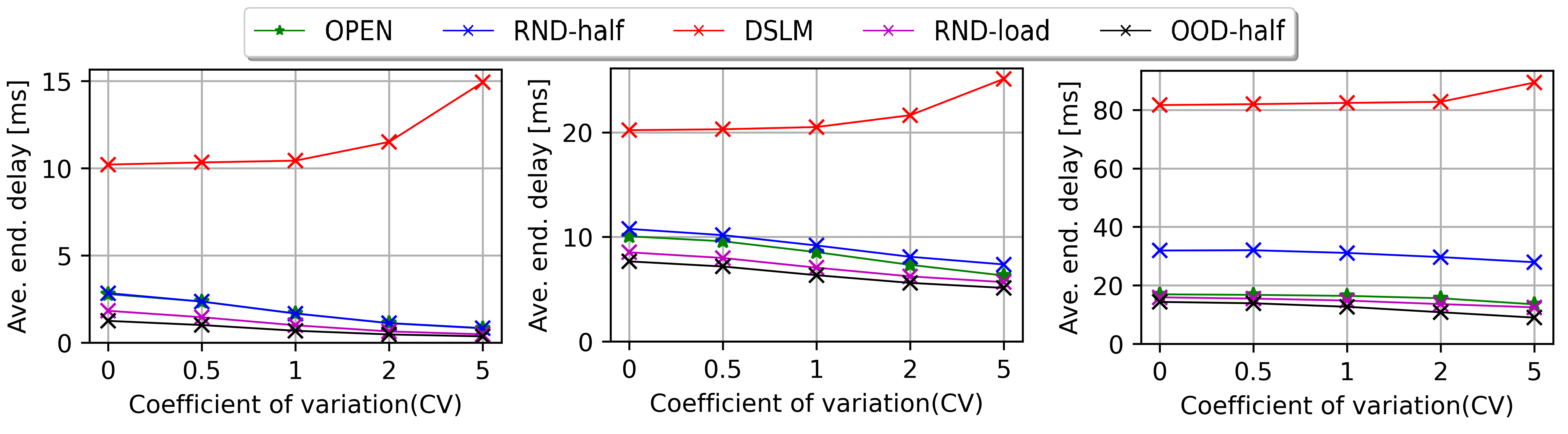} 
  \caption{Average \en latency under mix bimodal with exponential computation times for normalized load $\gamma=0.2$ and for different scenarios: S1 (left), S2 (middle), S3 (right)
  .}
  \label{figure:Cv1d2}
\end{figure*}
\begin{figure*}[!tb] 
    \includegraphics[width=\linewidth,height=4cm]{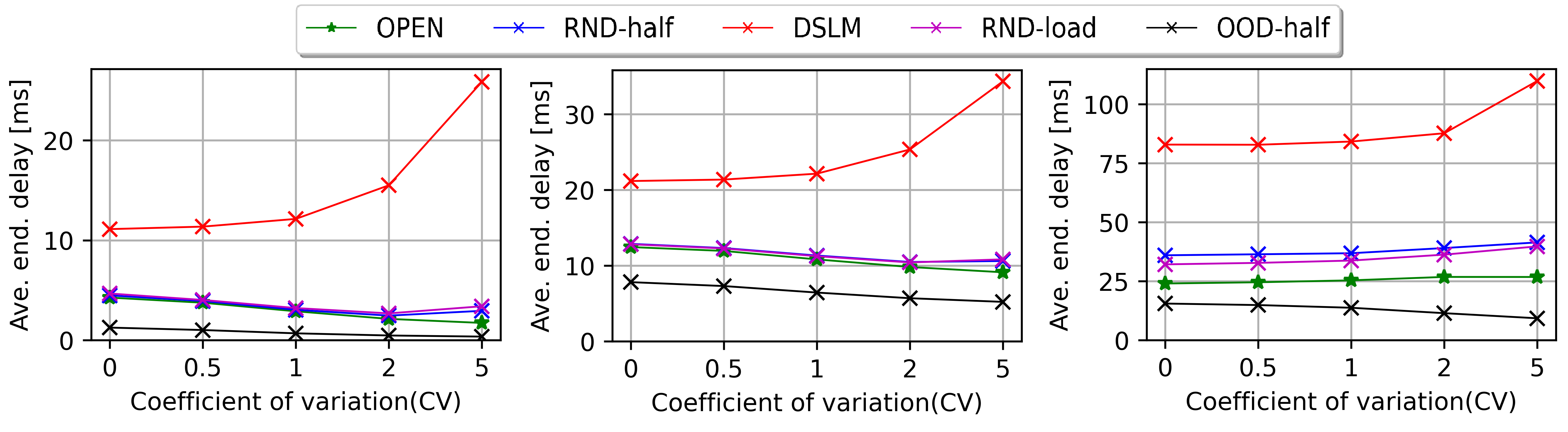} 
  \caption{Average \en latency under mix bimodal with exponential computation times for normalized load $\gamma=0.8$ and for different scenarios: S1 (left), S2 (middle), S3 (right)
  .}
  \label{figure:Cv1d3}
\end{figure*}

For all scenarios, when Cv=$0$, all EPs have one computation time which is exponentially distributed with an average of $10$~ms. So, the experienced latency in low loads and high loads is similar to what is reported in Fig.~\ref{figure:avgDelay} (left) when $\gamma=0.2$ and $\gamma=0.8$ respectively. For all scenarios, the redundancy reduces the average delay. 

In scenario S1, by increasing the CV, all the algorithms, except {DSLM}, decrease the average endorsement delay. This is because the average of the minimum between a sequence of i.i.d.\ random variables is smaller when the variance is larger. 
At low load ($\gamma=0.2$), {RND-load} is performing a bit better than OPEN and RND-half, as it sends to all EPs. {OOD-half} is slightly better than RND-load thanks to the lower load guaranteed by setting $R=Q/2$.
At high load ($\gamma=0.8$), {RND-load} adopts $R>Q/2$ and the corresponding queueing penalizes the overall response delay.
By increasing the CV, for DSLM in both low loads and high loads, the average endorsement delay increases since even a small possibility of selecting an EP with very high computation time can increase the average \en delay.

In scenario S2, similar to scenario S1, in both low loads and high loads by increasing the CV, the average endorsement delay for DSLM increases. 
In the other approaches, redundancy increases the chance of selecting the EP with lower network delays. By increasing Cv, redundancy can benefit from more variability in the computation times, and reduce the latency even more.
At low load ($\gamma=0.2$), {OPEN} performs slightly better than RND-half as it also selects EPs with overall lower network delays and processing delays. {RND-load} is performing even better as it sends to all EPs. 
{OOD-half} is even better with a small margin, thanks to the lower load guaranteed by setting $R=Q/2$.
At high load ($\gamma=0.8$), {RND-load} adopts $R>Q/2$ and the corresponding queueing penalizes the overall response delay. {OPEN} acts slightly better thanks to the smaller value of $R$.

In the S3 scenario, all algorithms benefiting from the redundancy, are not affected by Cv, as the variability in the computation times is compensated (i.e., equalized) by the network delays which vary between $0$~ms and $10$ times the average computation time. 
At low load ($\gamma=0.2$), {OPEN} being very close to {OOD-half}, outperforms {RND-half} by a factor around $2$ due to selecting closer EPs in terms of network delays. {RND-load} achieves the same results as {OPEN} by containing the closest EP as it selects all the EPs (i.e., $R=Q$). 
At high load ($\gamma=0.8$), same as scenario S2, OPEN reduces the \en latency up to 70\% compared to DSLM. {RND-load} is penalized by both the queueing and lower number of selected EPs (i.e., $R<Q$), as differently from OPEN, RND-load may not select the closest EPs.
Again, for both loads {OOD-half} performs the best, since it always selects the minimum combination of the network delay, the queueing delay, and the processing delay. 
 

\subsubsection{Log-normal computation times}

The simulation results are shown in Fig.~\ref{figure:Cv1d4}.
All the results are similar to the results of high loads ($\gamma=0.8$) in Bi-modal computation times, except for scenario S1, in which for constant computation time (Cv=$0$), redundancy is not beneficial for delay reduction. For larger Cv, all the algorithms, except {DSLM}, decrease the average endorsement delay as the average of the minimum between a sequence of i.i.d.\ random variables is smaller when the variance is larger.
\begin{figure*}[!tb] 
    \includegraphics[width=\linewidth,height=4cm]{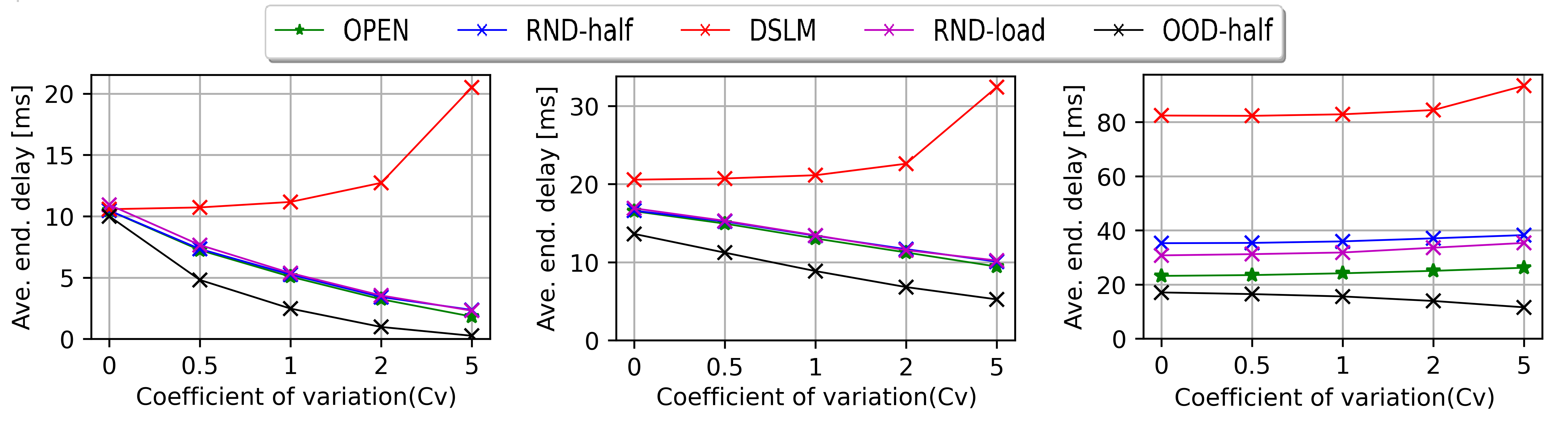} 
  \caption{Average \en latency under log-normal computation times for normalized load $\gamma=0.8$ and different scenarios: S1 (left), S2 (center), S3 (right).}
  \label{figure:Cv1d4}
\end{figure*}
}

\subsubsection{Cyclo-stationary request process}
We compared {OPEN} with other approaches under Poisson-modulated cycle-stationary load. We evaluated the average endorsement delays by using an exponential moving average. The results are provided in Table~\ref{table:nonsta1}.
\begin{table}[!tb]
    \centering
    \caption{Average endorsement delays for cyclo-stationary input rates with different scenarios.}
    \begin{tabular}{@{}|c||cccc|@{}}
        \hline
         Scenario & S1~[ms] & S2~[ms] & S3~[ms] & S3~[ms] \\
         & & & centered clients & far clients \\
         \hline    
         OOD-half & $1.2$ & $7.1$ & $13.6$ & $16.9$ \\
         OPEN & $2.9$ & $11.7$ & $17.9$ & $24.2$ \\
         RND-load & $3.1$ & $11.2$ & $19.2$ & $23.7$ \\
         RND-half & $3.2$ & $12.2$ & $25.9$ & $43.6$ \\
         DSLM & $10.8$ & $22.8$ & $65.9$ & $110.3$ \\    
         \hline
    \end{tabular}
    \label{table:nonsta1}
\end{table}
In S1 and S2, OPEN, RND-half, and RND-load showed almost constant average \en latency, while DSLM and ODD results are the highest and the lowest respectively. Interestingly, all the results for different approaches in S1 and S2 are very close to the results gained from Fig.~\ref{figure:avgDelay} (left) for $\gamma=0.5$, even if the load was changing periodically. This means that all of them are robust to load change in homogeneous scenarios.   

In S3, as a non-homogenous real scenario, {OPEN} shows a small difference of the average \en latency between 
centered and far clients (recall their definition in Sec.~\ref{sec:met}). This difference ($6$~ms) is negligible compared to the average network delays in S3 ($50$~ms).
The same behavior is observed for RND-load. 
On the other hand, in {RND-half} the performance depends heavily on the client's position; even in the case of centered clients, RND-half experiences more \en latency than OPEN with far clients.
As expected, OOD-half achieves the best \en latency with minimum difference regardless of the client's position. DSLM performs the worst with latencies about 4 times larger than OPEN.

These results show that {OPEN} adapts to load changes even in the presence of unbalanced network delays. Also, OPEN outperforms RND-half and DSLM, while it shows similar results to RND-load and to OOD-half.   

\tagged{trp}{
\section{Proof-of-concept implementation}\label{sec:poc}
We have validated OPEN in a practical scenario, by implementing a proof-of-concept solution in a real HF platform and testing it. Our preliminary results show that OPEN is easily implementable in HF, just with some coding on the client with no modification to HF core-engine code. 
Using HF v2.2, $8$ EPs are implemented as Docker containers on a single server and connected directly through a Docker virtual network, following the setup of  Fig.~\ref{fig:system}.
A background load of endorsement requests is created across all the EPs such that each EP is loaded differently from the others, according to Table~\ref{tab:back}. Fig.~\ref{fig:openrnd4} shows the corresponding processing delays at the different EPs, which are unbalanced as expected. In our settings, the network delays are practically negligible. 

\begin{table}[!tb]
\begin{center}
\caption{Background load for the EPs in the testbed 
}\label{tab:back}
\begin{tabular}{|c|c|c|}
\hline
Endorser Peer id & Request rate [1/s]\\
\hline
1 & 1.2\\
2 & 2.3\\
3 & 3.5\\
4 & 4.7\\
5 & 5.9\\
6 & 7.1\\
7 & 8.3\\
8 & 9.5\\
\hline
\end{tabular}
\end{center}
\end{table}

\begin{figure}[!tb]
	\begin{center}
        \includegraphics[width=0.9\linewidth]{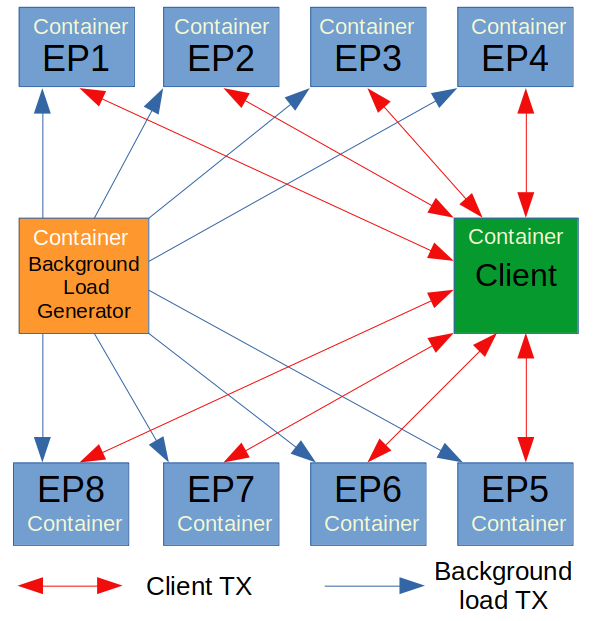}
        \caption{Experimental testbed architecture 
        }
        \label{fig:system} 
        \end{center}
\end{figure}        
 The measured endorsement delay is reported in Table~\ref{tab:1} shows that OPEN can reduce the endorsement delay by $22$\% compared to RND-half. We can deduce that, in this non-homogeneous case, OPEN can ``learn" the best EPs where to send the request.
\begin{figure}[!tb]
\begin{center}
        \includegraphics[width=0.9\linewidth]{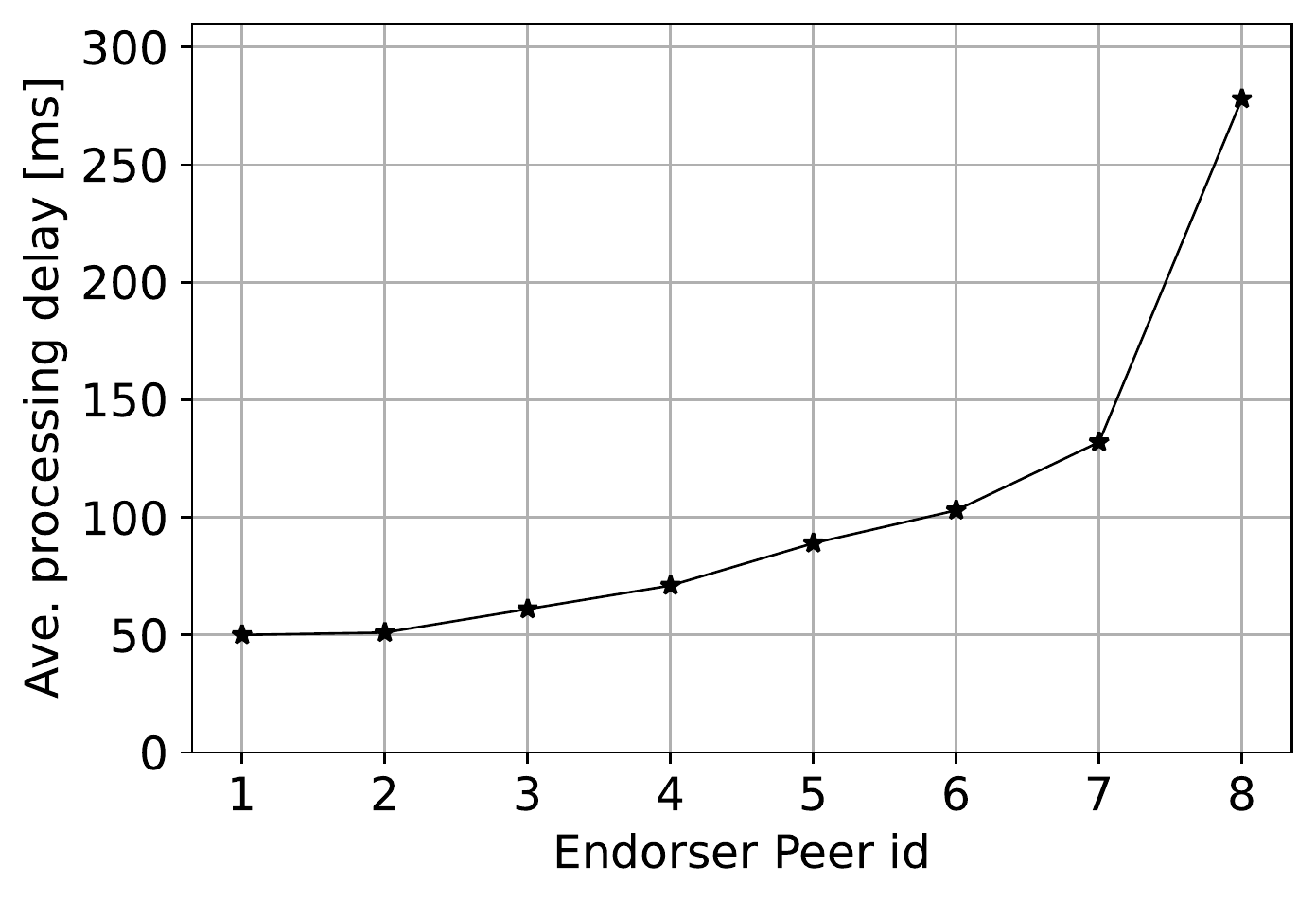}
                \end{center}
\caption{Measured processing delay at each EP.}
\label{fig:openrnd4} 
\end{figure}

\begin{table}[!tb]
\begin{center}
\caption{Experimental results in the considered testbed}\label{tab:1}
\begin{tabular}{|c|c|c|}
\hline
Algorithm & Ave.\ endorsement delay & $95$\% confidence interval \\
\hline
RND-half & $115$~ms & $[110,120]$~ms \\ 
OPEN & $94$~ms & $[87,101]$~ms\\ 
\hline
\end{tabular}
\end{center}
\end{table}

}

\tagged{trp,jrn}
\section{Related works}\label{sec:related}


Different works modeled analytically the \en process in HF. \cite{end3} modeled the EPs as M/M/1 queues and considered the propagation delays in the network model, coherently with our work. It showed that using a pure ``AND'' \en policy, compared to ``OR'' or ``$k\text{-OutOf-}Q$'' policies, significantly increases the \en delay by increasing the number of organizations.
Similarly, \cite{end9} showed the same results by modeling HF using stochastic reward networks. They also observed that for ``OR'' and ``$k\text{-OutOf-}Q$'' policies the latency decreases by increasing the number of EPs within the same organization, similar to the effect of increasing $R$ in our work.

\cite{end6} modeled HF using Generalized Stochastic Petri Nets~(GSPN) and showed that for high request arrival rates, the \en phase is a performance bottleneck of HF. This is coherent with the motivation of our work, focusing on optimizing the \en phase. 
\cite{end2} considered four organizations and showed that simple \en policies based on ``AND'', ``OR'' and ``$k\text{-OutOf-}Q$'' operators, experience the minimum latency. 
\cite{end4} showed that using ``$k\text{-OutOf-}Q$'' policy, increasing $k$ decreases the throughput and increases the latency. This is coherent with our system model since the \en latency will be the maximum among $k$ request delays. 
\cite{end5} optimized the HF configurations to improve the throughput and reduce the delays. Coherently with our results, they showed the equivalence between the ``$1\text{-OutOf-}Q$'' policy and the ``OR'' among all organizations. Our results in Sec.~\ref{sec:endo} generalize such property.

Some works tried to improve \en phase of HF. \cite{soa} proposed a way to select the best EP for ``$1\text{-OutOf-}Q$'' endorsement policy in HF~v$1.4$. They introduced an algorithm running in each EP, called DSLM, to calculate the EP's load by considering multiple resource metrics within an EP. 
For each request, only half of the EPs are probed to get their actual load, coherently with $R=Q/2$ adopted in OPEN.
A version of DSLM tailored to our system model has been considered in Sec.~\ref{sec:simu} as an alternative approach to be compared with OPEN.
\cite{end8} showed that the failed transactions due to timeouts are affected by the number of statements within the ``AND'' operator defined in the \en policy. Such failures increase the latency and waste of resources due to re-transmissions at the application level.

 \cite{end11} suggested a way to reduce the possibility of endorsing conflicting transactions. They proposed a cache mechanism inside the EPs to record some data of the recently endorsed transactions and drop the conflicting proposal before execution. Recall that, in the endorsement phase, no execution results will update the world state, so transactions with similar initial world states can propose different updates for the world state. 
 This early drop of the proposal before execution will reduce the computing and network resources by reducing the chance of transaction failure at the validation phase. 
\cite{end1} removed unnecessary operations for pure read requests, by modifying the EPs algorithm to differentiate the process of pure read transactions from mixed read/write ones.
This reduced the latency and resource consumption in the endorsement phase.

The main idea of OPEN is to send multiple replicas of the same request to multiple peers. This approach has been deeply investigated in the literature on queueing theory, motivated by the problem of optimal job assignment to servers. As the literature is huge, we focus just on a few papers for the sake of space. In the generic literature about distributed systems, several works~\cite{red00, red0,red1} investigated the effect of sending replicas of a job to more than one randomly selected server and waiting for the first response to exploit redundancy, as in OPEN. These works introduced redundancy to reduce the job completion time and overcome server-side variability, where a server might be temporarily slow, due to many factors like garbage collection, background load, or even network interrupts. \cite{red2} showed that, besides its simplicity, in many cases, redundancy outperforms other techniques for overall response time. \cite{red3}, by decoupling the inherent job size from the server-side slowdown, described a more realistic model of redundancy and showed that increasing the level of redundancy can degrade the performance, coherently with our observations in Sec.~\ref{sec:mod}.
\cite{opt0} showed that a major improvement results from having each job replicated to only two servers, coherently with our Fig.~\ref{fig:Opt-R} which shows that for the 1-OutOf-$k$ policy, the \en latency decreases mostly when varying $R$ from 1 to 2. On the contrary, in our work, we have considered the optimal value of $R$ that minimizes the \en latency, which may be greater than 2.
 \cite{opt00} showed the reverse relation between the incoming load and the optimal number of replicas, coherently with~\eqref{eq:1rhat3}, and experimentally obtained the optimal redundancy factor in different job arrival rates and for different service times. Also, \cite{opt1} theoretically demonstrated that, when replicating the job to multiple servers, the best choice in case of low (or, high) loads is to replicate to all (or, only $1$) servers, coherently with~\eqref{eq:1rhat3} and with the operations of OPEN, which adapts the replication factor to the instantaneous load.

\tagged{trp,jrn}
\section{Conclusions}\label{sec:conc}

We addressed the problem of minimizing the \en latency in HF. 
Leveraging some results obtained in a simplified queueing model, we proposed the {OPEN} algorithm to choose multiple EPs for each transaction by taking into account the measurements from the past requests, in a realistic scenario. Through simulations with OMNeT++, we showed that independently from the scenario, {OPEN} is robust and achieves performance remarkably close to the optimal oracle-based approach ({OOD}) and outperforms state-of-the-art solutions.

\tagged{trp}{
Due to the key role of \en policies, we expect that our results will inspire new research directions and implementation efforts in optimizing the performance of HF and other blockchain platforms. Given the complexity of the addressed problem, new solutions based on machine learning and meta-heuristics could be devised.}

OPEN has been validated only by extensive simulations. Beyond the scope of this work, we implemented OPEN in HF to validate the proposed approach in a realistic setting. The experimental results of the first version of the proof-of-concept are very promising. We leave the optimization of the design of the client-based OPEN solution and its extensive experimental validation for future work.

\bibliographystyle{IEEEtran}
\bibliography{references}

\begin{IEEEbiography}[{\includegraphics[width=1in,height=1.25in,clip,keepaspectratio]{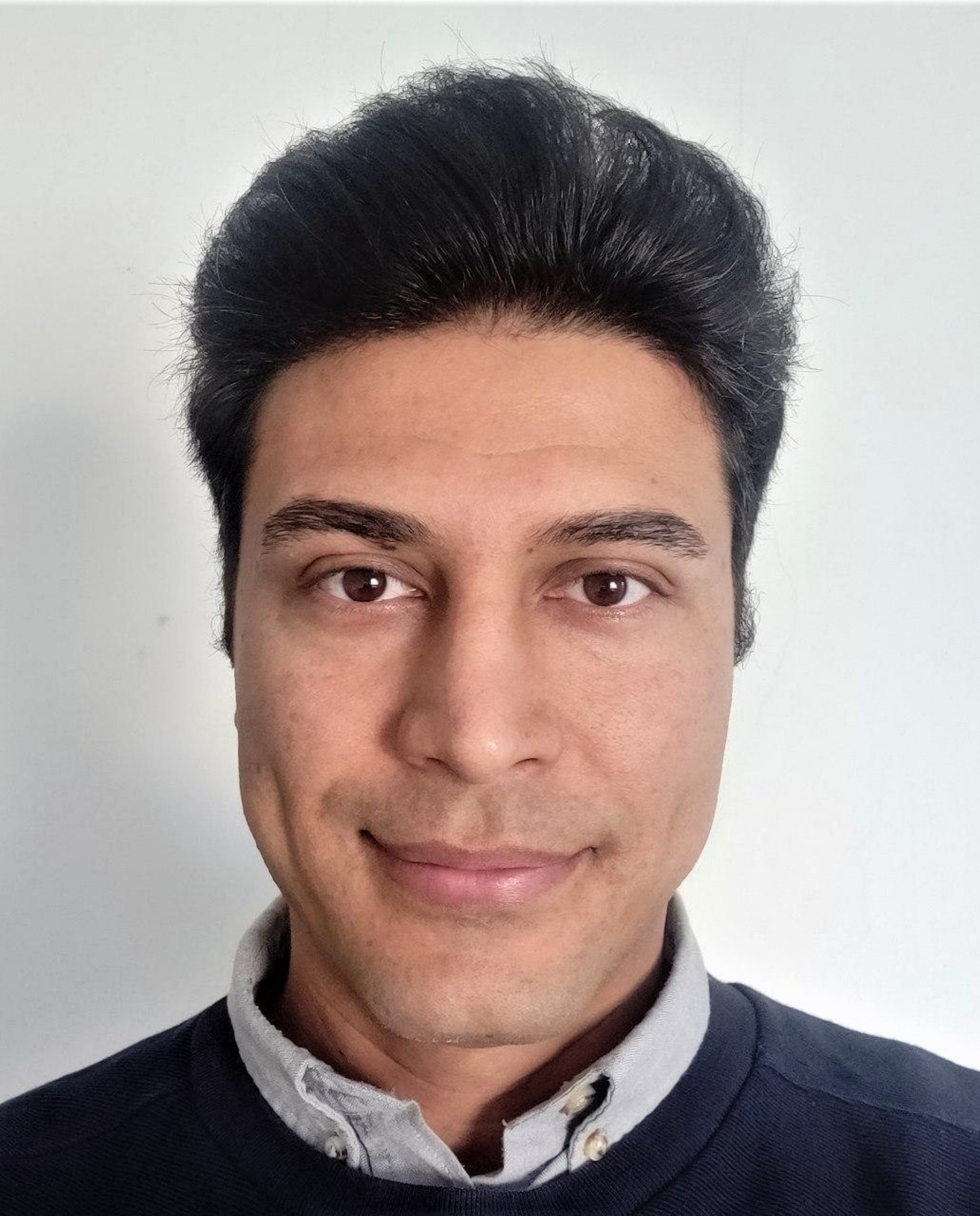}}]
    {Iman Lotfimahyari} received his BSc and MSc in Electronics Engineering from IAU University in 2003 and 2007 respectively. In March 2020, he received his second MSc in Telecommunication Engineering from Politecnico di Torino, Italy, and joined the Telecommunication Networks Group at Politecnico di Torino as a Ph.D. student. His current research interests involve programmable data planes for SDN and Blockchains.   
    \end{IEEEbiography}

   \begin{IEEEbiography}[{\includegraphics[width=1in,height=1.25in,clip,keepaspectratio]{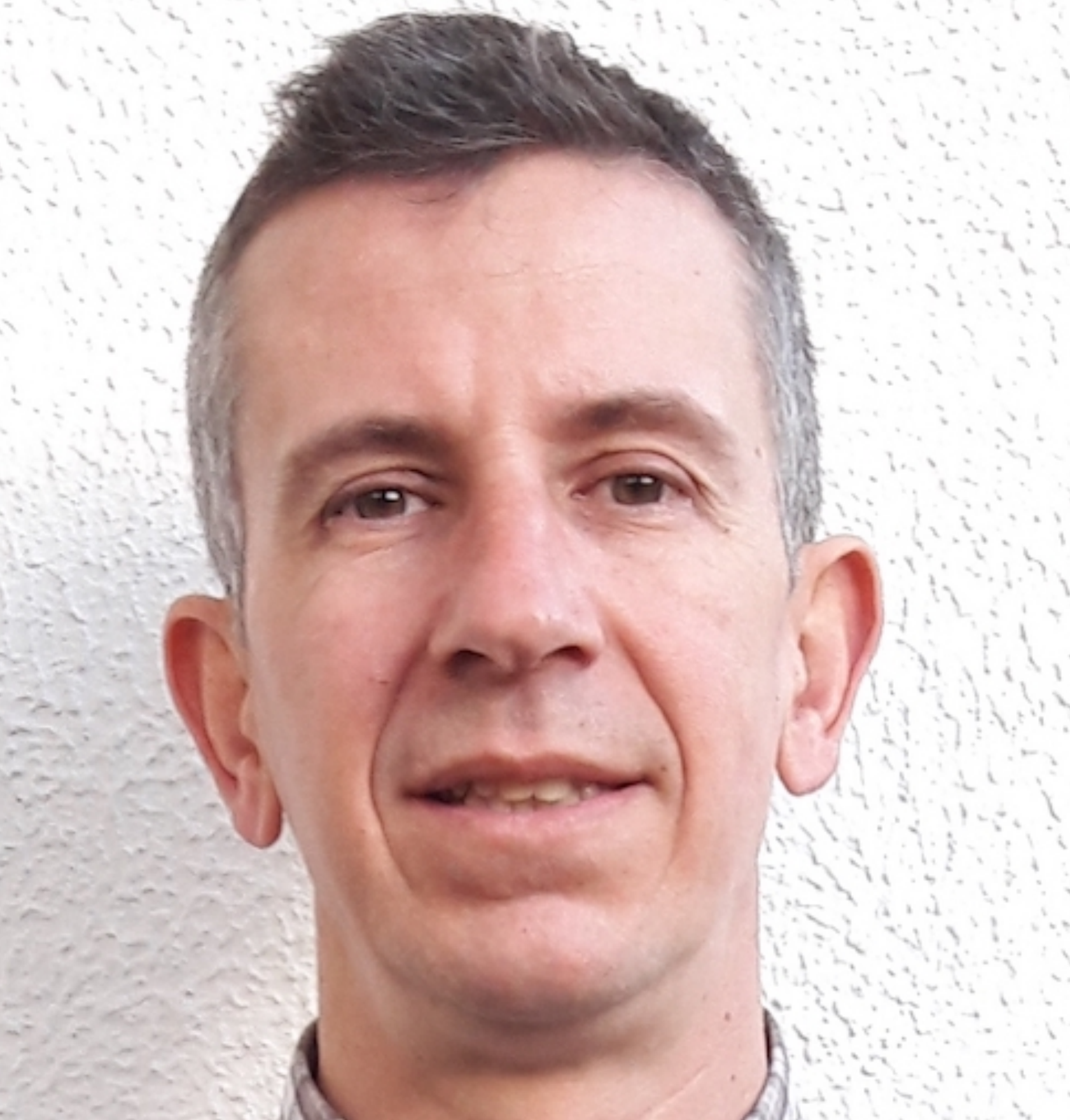}}]
    {Paolo Giaccone} received the Dr.Ing. and Ph.D. degrees in telecommunications engineering from the Politecnico di Torino, Italy, in 1998 and 2001, respectively. He is currently an Associate Professor in the Department of Electronics, Politecnico di Torino.
    During 2000-2001 and 2002, he was with the Information Systems Networking Lab, Electrical Engineering Dept., Stanford University, Stanford, CA. His main area of interest is the design of network control and optimization algorithms.
    \end{IEEEbiography}
 

\end{document}